\documentclass[11pt]{article}
\usepackage[utf8]{inputenc}
\usepackage[margin=1in]{geometry}
\usepackage{enumitem}
\usepackage{amsmath, amssymb, amsthm}
\usepackage{mathtools}
\usepackage{tikz, tikz-cd}
\usetikzlibrary{decorations.markings}
\usepackage{xcolor}
\usepackage{verbatim}
\usepackage{graphicx}
\usepackage{hyperref}
\usepackage{titlesec}
\usepackage{braket}
\usepackage{yhmath}
\usepackage{indentfirst}
\usepackage{bm}
\usepackage{natbib}
\usepackage{authblk}

\makeatletter
\let\save@mathaccent\mathaccent
\newcommand*\if@single[3]{%
  \setbox0\hbox{${\mathaccent"0362{#1}}^H$}%
  \setbox2\hbox{${\mathaccent"0362{\kern0pt#1}}^H$}%
  \ifdim\ht0=\ht2 #3\else #2\fi
  }
\newcommand*\rel@kern[1]{\kern#1\dimexpr\macc@kerna}
\newcommand*\widebar[1]{\@ifnextchar^{{\wide@bar{#1}{0}}}{\wide@bar{#1}{1}}}
\newcommand*\wide@bar[2]{\if@single{#1}{\wide@bar@{#1}{#2}{1}}{\wide@bar@{#1}{#2}{2}}}
\newcommand*\wide@bar@[3]{%
  \begingroup
  \def\mathaccent##1##2{%
    \let\mathaccent\save@mathaccent
    \if#32 \let\macc@nucleus\first@char \fi
    \setbox\z@\hbox{$\macc@style{\macc@nucleus}_{}$}%
    \setbox\tw@\hbox{$\macc@style{\macc@nucleus}{}_{}$}%
    \dimen@\wd\tw@
    \advance\dimen@-\wd\z@
    \divide\dimen@ 3
    \@tempdima\wd\tw@
    \advance\@tempdima-\scriptspace
    \divide\@tempdima 10
    \advance\dimen@-\@tempdima
    \ifdim\dimen@>\z@ \dimen@0pt\fi
    \rel@kern{0.6}\kern-\dimen@
    \if#31
      \overline{\rel@kern{-0.6}\kern\dimen@\macc@nucleus\rel@kern{0.4}\kern\dimen@}%
      \advance\dimen@0.4\dimexpr\macc@kerna
      \let\final@kern#2%
      \ifdim\dimen@<\z@ \let\final@kern1\fi
      \if\final@kern1 \kern-\dimen@\fi
    \else
      \overline{\rel@kern{-0.6}\kern\dimen@#1}%
    \fi
  }%
  \macc@depth\@ne
  \let\math@bgroup\@empty \let\math@egroup\macc@set@skewchar
  \mathsurround\z@ \frozen@everymath{\mathgroup\macc@group\relax}%
  \macc@set@skewchar\relax
  \let\mathaccentV\macc@nested@a
  \if#31
    \macc@nested@a\relax111{#1}%
  \else
    \def\gobble@till@marker##1\endmarker{}%
    \futurelet\first@char\gobble@till@marker#1\endmarker
    \ifcat\noexpand\first@char A\else
      \def\first@char{}%
    \fi
    \macc@nested@a\relax111{\first@char}%
  \fi
  \endgroup
}
\makeatother

\theoremstyle{plain}
\newtheorem{theorem}{Theorem}[section]

\newtheorem{lemma}[theorem]{Lemma}

\theoremstyle{definition}
\newtheorem{definition}[theorem]{Definition}
\newtheorem{assumption}{Assumption}

\newenvironment{problem*}[1]
  {\problem}
  {\endproblem}

\theoremstyle{remark}

\numberwithin{equation}{section}

\newcommand{\mat}[1]{\bm{#1}}

\newcommand{\C}{\mathbb{C}}
\newcommand{\N}{\mathbb{N}}

\newcommand{\Z}{\mathbb{Z}}




\newcommand\norm[1]{\left\lVert#1\right\rVert}

\titleformat{\subsection}[runin]{\bfseries}{}{}{}

\begin{document}

\title{Edge Hamiltonian for Free Fermion Quantum Hall Models}

\author[1]{Simon Du}
\author[1]{Martin Fraas}
\author[1]{Abi Gopal}
\author[1]{Nathan Singh}

\affil[1]{Department of Mathematics, University of California, Davis}

\date{\today }

\maketitle

\begin{abstract} 
We investigate a proposal of Kitaev for a microscopic construction of a Hamiltonian intended to describe the edge dynamics of a quantum Hall system. We show that the construction works in the setting of translation-invariant free-fermion Hamiltonians. In this case, the resulting edge Hamiltonian exhibits only edge modes, and these modes have the correct chirality.

\end{abstract}

\section{Introduction}

This article is the authors’ exploration of a proposal of Kitaev \cite{Kitaev} for defining and computing the chiral central charge of microscopic models.

The standard theoretical physics definition \cite{Kane_Fisher_97} of the chiral central charge, $c$, is formulated in terms of edge currents. On the edge of a material that is gapped in the bulk, there may flow an energy current $J$. For temperatures $T$ much smaller than the bulk gap,
\begin{equation}
\label{eq:1}
J = \frac{\pi}{12}\, T^2\, c.
\end{equation}
Here we use units where $k_b = 1$ and $h = 2\pi$. The prefactor $\pi/12$ is chosen so that, for free fermions, the chiral central charge agrees with the Hall conductance and, in particular, is integer valued. The $T^2$ dependence follows from scaling symmetry \cite{Kapustin_2019}.

Let us briefly recall the part of \cite{Kane_Fisher_97} that is relevant to the motivation for this work. Let $H$ be a translation-invariant free-fermion Hamiltonian on the square lattice, and let $H_{\mathrm{edge}}$ denote the Hamiltonian of a system with an edge along the line $x = 0$. The edge Hamiltonian remains translation invariant in the $y$-direction and thus has a fiber decomposition $H_{\mathrm{edge}}(k_y)$.
The spectrum of $H_{\mathrm{edge}}$ consists of the bulk spectrum of $H$ together with the edge spectrum. The latter consists of finitely many edge modes
$ E_1(k_y),\dots, E_n(k_y)$ that appear in the bulk gaps. An edge mode $E(k)$ is called chiral if it connects the band below the gap with the band above the gap. Let $k_-$ and $k_+$ be the momenta at which the mode merges with the lower and upper bands, respectively. Its chirality is defined as
\[
\mathrm{sign}(E) =
\begin{cases}
+1, & k_- < k_+, \\
-1, & k_- > k_+, \\
0, & \text{if the mode does not connect the two bands}.
\end{cases}
\]
It is well known \cite{Hatsugai, Graf} that the Hall conductance is given by the sum
$\mathrm{sign}(E_1) + \cdots + \mathrm{sign}(E_n)$.

The energy-current observable is formally
\[
H_{\mathrm{edge}}(k_y)\, \frac{\partial H_{\mathrm{edge}}(k_y)}{\partial k_y}.
\]
Although this operator is not trace class, one may formally write the energy current as
\begin{equation}
\label{eq:J}
J = \int_0^{2\pi} \mathrm{tr}\!\left( 
H_{\mathrm{edge}}(k_y) 
\frac{\partial H_{\mathrm{edge}}(k_y)}{\partial k_y} 
\frac{1}{e^{\beta H_{\mathrm{edge}}(k_y)} + 1}
- P
\right)\,
\frac{dk_y}{2\pi},
\end{equation}
where $P$ is the Fermi projection of the ground state. The chemical potential is set to $0$. Restricting the trace to the edge spectrum yields a sum of integrals of the form
\[
\int_{k_-}^{k_+} 
E(k)\, E'(k)\,
\Bigl(\frac{1}{e^{\beta E(k)} + 1} - \chi(-E(k)) \Bigr)
\frac{dk}{2\pi},
\]
where $\chi$ is the Heaviside step function. To leading order in $T$ one obtains
\[
\mathrm{sign}(E)\, T^2 \cdot 2\! \int_0^\infty \frac{x}{e^x+1}\,\frac{dx}{2\pi}
= \mathrm{sign}(E) \, T^2\, \frac{\pi}{12}.
\]
Thus
\[
J = \frac{\pi}{12} T^2 \sum_{j=1}^n \mathrm{sign}(E_j),
\]
showing that $c$ equals the Hall conductance.

The key point we want to emphasize is that obtaining the correct result relies on restricting the trace to the edge modes and effectively ignoring the bulk. Kane and Fisher make the same choice beyond the free-fermion case: they work with an effective chiral Luttinger liquid model for the edge modes and compute the current within that model, without including bulk contributions. A question how to extract an effective edge dynamics from a microscopic model remains open.

\vspace{0.3cm}

\noindent\textbf{Kitaev’s proposal.}
In Appendix~D of \cite{Kitaev}, two related ideas are introduced. First, Kitaev proposes a microscopic construction of a Hamiltonian, which we denote by $\widetilde{H}_{\mathrm{edge}}$, intended to model the edge dynamics only. Second, he proposes a way to compute the chiral central charge using only bulk quantities. The second proposal was later developed in detail by Kapustin and Spodyneiko \cite{Kapustin2020}, who connected it to linear response via the Kubo formula and proved that for free fermions it agrees with the Hall conductance. To the best of our knowledge, the first proposal has not yet been explored. In this article we study it for translation-invariant free fermions and show that it indeed produces an edge-only Hamiltonian with the correct number of chiral modes.

Kitaev considers a many-body Hamiltonian
\[
H = \sum_{x,y} H_{x,y},
\]
decomposed into local terms $H_{x,y}$ anchored\footnote{For the notion of anchored interactions, see \cite{bachmann2024classification}.} at the point $(x,y)$. He then proposes the edge Hamiltonian
\[
\widetilde{H}_{\mathrm{edge}}
= \sum_{x,y} g(x)\, H_{x,y},
\]
where $g(x)$ is a monotonically increasing function satisfying
\[
\lim_{x\to -\infty} g(x) = 0, \qquad
\lim_{x\to +\infty} g(x) = +\infty,
\]
and which should grow sub-exponentially. The precise shape of $g$ is likely unimportant, but it must not grow too fast. The heuristic is that for $x \ll 0$, the contribution of $H_{x,y}$ is energetically suppressed so that, at any positive temperature, the state is close to the maximally mixed state. On the other hand, when restricting to an interval $x\in (L,L+\ell)$ with $L\gg 0$,
\[
\widetilde{H}_{\mathrm{edge}}
\approx g(L) \sum_{L<x<L+\ell} 
\frac{g(x)}{g(L)}\, H_{x,y}.
\]
For $L\gg 0$ the ratio $g(x)/g(L)$ is approximately $1$ on this interval. Thus the Hamiltonian there is a large multiple of $H$ and the state is close to the bulk ground state if $\ell$ is much larger than the correlation length. So the bulk is effectively sent to infinite energy. In particular, any finite-energy state has support only at finite $x$.

In this article we investigate this construction for free fermions with the explicit choice
\[
g(x)=
\begin{cases}
0, & x<0,\\
x, & x \geq 0.
\end{cases}
\]
We show that the single-particle Hamiltonian $\widetilde{H}_{\mathrm{edge}}(k_y)$ has discrete spectrum—as an edge Hamiltonian should—and that its spectral flow equals the Hall conductance. It follows that (\ref{eq:J}), with $\widetilde{H}_{\mathrm{edge}}(k_y)$ in place of $H_{\mathrm{edge}}(k_y)$, is well defined and reproduces (\ref{eq:1}).

Finally, before describing the results in detail, we note that deriving edge dynamics from a microscopic Hamiltonian is of independent interest beyond the study of chiral central charge. In field-theoretic approaches, the boundary action plays a central role in the theory of the fractional quantum Hall effect \cite{Frohlich1, Frohlich2, Frohlich3}. Microscopically we have detailed understanding of bulk-edge correspondence (see \cite{Bols} for the most recent developments). Edge dynamics have been described using semiclassical methods \cite{drouot2021, Drouot}.  For the interacting case \cite{Porta, Porta2} proved that correlations along the edge agree with the Luttinger liquid model. However as far as we know there is no microscopic construction of a Hamiltonian that would describe edge only dynamics. 

\section{Setup and Results}
\label{sec:setup}
We consider a lattice $\mathbb{Z}^2$ with the usual notation $x, y$ for the two axes. We equip $\mathbb{Z}^2$ with the $\ell^\infty$ distance. Let $H$ be a finite range one-periodic Hamiltonian on $\ell^2(\Z^2, \mathbb{C}^n)$. After suitable coarse graining we can without loss of generality assume that the range is $1$, meaning that there is no hopping between any sites with distance larger than $1$. The Fourier transform of $H$ acts as a multiplication operator on $L^2([0, 2\pi)^2,\, \mathbb{C}^n)$, and we denote it by $H(k_x, k_y)$. 
Since $H$ is range 1, we may write $H(k_x, k_y) = V(k_y) + A(k_y) e^{i k_x} + A^*(k_y) e^{-i k_x}$ where $A,A^*,V$ are $n \times n$ matrices and $V$ is Hermitian.

We assume that $H(k_x, k_y)$ has $n$ gapped bands.
\begin{assumption}
\label{assumption:bulk}
The $n \times n$ matrix $H(k_x, k_y)$ has $n$ non-degenerate eigenvalues $E_1(k_x, k_y) < E_2(k_x, k_y) < \dots < E_n(k_x, k_y)$, and for any $ j =1, \dots, n-1$, there exists number $E_F$, such that 
$$
E_j(k_x, k_y) < E_F < E_{j+1}(k_x, k_y),
$$
for all $k_x, k_y$. 

In addition, we assume that there exists an integer $k$, $1 \leq k < n$, such that $E_F = 0$ is in the gap between $(k-1)$th and $k$th bands, i.e. $E_{k-1}(k_x, k_y) < 0 < E_{k}(k_x, k_y).$
\end{assumption}
We do not think that having $n$ gapped bands is essential, and the bands may merge.  The reason for the choice is that we are trying to demonstrate Kitaev's proposal in the simplest possible setting. The last part of the assumption, that zero lies in the gap, is set for convenience; we can always obtain that by shifting the energy.  In line with the assumption, we will set the Fermi energy to be $E_F = 0$. The Hall conductance, $\kappa$, of the ground state is then given by the sum of Chern numbers, $c_j$, of the bands $j = 1, \dots, k -1$.

Let $\varphi_j(k_x, k_y)$ be eigenstates of $H(k_x, k_y)$ corresponding to the energy $E_j(k_x, k_y)$. We pick $\varphi_j(k_x, k_y)$ smooth in $[0, 2\pi) \times [0, 2\pi)$ with 
$$
\varphi_j(2 \pi, k_y) = e^{i \theta_j(k_y)} \varphi_j(0, k_y), \quad \varphi_j(k_x, 2 \pi) = \varphi_j(k_x, 0),
$$
where $\theta_j(k)$ is called the Berry phase. It is well known that the Chern number $c_j$ is equal to the winding number of $\theta_j(k)$, i.e., $c_j = \frac{1}{2\pi} (\theta_j(2\pi) - \theta_j(0))$.

We decompose $H = \sum_{x} H_x$ for $H_x$ self adjoint, where $H_x$ consists of all on-site terms at $x$ and all hopping terms between site $x$ and site $x + 1$, for all $y$. We write $H(k_y) = \sum_{x} H_x(k_y)$ for the partial Fourier transform of $H$ in the $y$-direction.
\begin{definition}\label{Htildedef}
We define the distance modulated Hamiltonian 
  \begin{gather}
  \widetilde{H} \coloneq \sum_{x} x H_x.
  \end{gather}
We define $\widetilde{H}_{\mathrm{edge}}$ as the restriction of $\widetilde{H}$ to $\ell^2(\mathbb{N} \times \mathbb{Z}, \mathbb{C}^n)$. The restriction is well defined by our choice of $H_1$,  which has no hopping between $x=1$ and $x=0$.
\end{definition}
To be precise, since $\widetilde{H}$ is an unbounded operator, the definition domain of $\widetilde{H}$ is all  $\psi \in \ell^2(\mathbb{Z}^2, \mathbb{C}^n)$ for which $\sum_{x,y} \|\psi(x,y)\|_{\mathbb{C}^n}^2 x^2 < \infty$. With this definition domain, Assumption~\ref{assumption:bulk} implies that $\widetilde{H}$ is self-adjoint -- this will be clear once we compute the partial Fourier transform of $\widetilde{H}(k_y)$. We denote the partial Fourier transform of $\widetilde{H}_{\mathrm{edge}}$ by $\widetilde{H}_{\mathrm{edge}}(k_y)$. The operator $\widetilde{H}_{\mathrm{edge}}(k_y)$ depends analytically  on $k_y$ and $\widetilde{H}_{\mathrm{edge}}(0) = \widetilde{H}_{\mathrm{edge}}(2 \pi)$. For a continuous family of self-adjoint operators defined on a circle with discrete spectrum, the spectral flow is defined as the number of signed crossings of eigen-branches through  a generic fiducial line \cite{Phillips}.

Our main result is
\begin{theorem}\label{WindingNumber}
	The spectrum of $\widetilde{H}_{\mathrm{edge}}(k_y)$ is discrete. 
The spectral flow of the spectrum of $\widetilde{H}_{\text{edge}}(k_y)$
  is equal to the Hall conductance.  
\end{theorem}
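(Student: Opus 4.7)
\emph{Discreteness.} The explicit action
\[
(\widetilde H_{\mathrm{edge}}(k_y)\psi)(x) = xV\psi(x) + xA\psi(x+1) + (x-1)A^*\psi(x-1),\quad x \geq 2,
\]
(with $V\psi(1)+A\psi(2)$ at $x=1$) yields the operator identity $\widetilde H_{\mathrm{edge}}(k_y) = H_{\mathrm{edge}}(k_y)\hat x - A(k_y)T^*$ on $\ell^2(\N,\C^n)$, where $\hat x$ is the position operator and $T^*$ is the left shift. Since $H_{\mathrm{edge}}(k_y)$ has only a finite-dimensional subspace of edge modes inside the bulk gap of Assumption~\ref{assumption:bulk} — and these are exponentially localized at $x=1$, so that the corresponding spectral projection composed with $\hat x$ is bounded — the bulk gap yields $\|H_{\mathrm{edge}}(k_y)\hat x\psi\|^2 \geq \delta^2\|\hat x\psi\|^2 - C_0\|\psi\|^2$. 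Together with boundedness of $AT^*$ this translates to the form bound $\|\widetilde H_{\mathrm{edge}}(k_y)\psi\|^2 \geq c\|\hat x\psi\|^2 - C\|\psi\|^2$ with $c>0$ uniform in $k_y$. Any eigenvector with eigenvalue $E$ then satisfies $\|\hat x\psi\|^2 \leq C'(E^2+1)\|\psi\|^2$, and a standard Rellich-type argument on $\ell^2(\N)$ gives compactness of $(\widetilde H_{\mathrm{edge}}(k_y)-i)^{-1}$.

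\emph{Spectral flow.} Consider the symmetric position-modulated operator $\widehat H(k_y) := \hat x^{1/2}\,H_{\mathrm{edge}}(k_y)\,\hat x^{1/2}$, which is self-adjoint with compact resolvent by the same estimate. A direct computation gives
\[
(\widetilde H_{\mathrm{edge}} - \widehat H)\psi(x) = \bigl(x-\sqrt{x(x+1)}\bigr)A\psi(x+1) + \bigl((x-1)-\sqrt{x(x-1)}\bigr)A^*\psi(x-1),
\]
whose coefficients converge to $-\tfrac12$ with $O(1/x)$ corrections, so $\widetilde H_{\mathrm{edge}} - \widehat H$ is bounded. The convex combination $s\widetilde H_{\mathrm{edge}} + (1-s)\widehat H$, $s \in [0,1]$, is therefore a continuous family of discrete-spectrum self-adjoint operators, and by homotopy invariance of spectral flow one obtains $\mathrm{sf}(\widetilde H_{\mathrm{edge}}(\cdot)) = \mathrm{sf}(\widehat H(\cdot))$. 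Next, zero modes of $\widehat H(k_y)$ are in one-to-one correspondence with zero modes of $H_{\mathrm{edge}}(k_y)$ via the intertwiner $\phi\leftrightarrow\psi = \hat x^{-1/2}\phi$ (well-defined because edge modes of $H_{\mathrm{edge}}$ are exponentially localized near $x=1$), and first-order perturbation theory at a crossing yields $\partial_{k_y}\lambda_{\widehat H}(k_y^*) = \langle\phi,\partial_{k_y}H_{\mathrm{edge}}(k_y^*)\phi\rangle$, i.e.\ $\|\phi\|^2$ times the analogous derivative for the edge-mode branch of $H_{\mathrm{edge}}$. Hence each zero-crossing for $\widehat H$ has the same sign as the corresponding zero-crossing for $H_{\mathrm{edge}}$, so $\mathrm{sf}(\widehat H) = \mathrm{sf}(H_{\mathrm{edge}})$. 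By the classical Hatsugai–Graf bulk–edge correspondence this equals the sum of Chern numbers of the filled bands, i.e., the Hall conductance.

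\emph{Main obstacle.} The delicate step is the homotopy argument: one must verify that the two-parameter family $s\widetilde H_{\mathrm{edge}}(k_y) + (1-s)\widehat H(k_y)$ admits a consistent choice of breaking point $k_0\in S^1$ at which $0$ lies in the resolvent for all $s\in[0,1]$, so that spectral flow is well-defined and invariant along the homotopy. Generic-position arguments should suffice, since zero modes of each $\widetilde H_s$ occur at isolated $k_y$ values, but a clean formulation requires careful uniform control in $s$ of the spectral projections near zero. The domain subtleties associated with the unbounded intertwiner $\hat x^{1/2}$ in the identification step are mild thanks to the exponential localization of the edge modes of $H_{\mathrm{edge}}$, but should also be checked explicitly.
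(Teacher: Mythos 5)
Your argument is essentially correct in outline, but it is a genuinely different route from the paper's. The paper works entirely in the Fourier picture: $\widetilde{H}(k_y)$ becomes the first-order ODE operator $iH(k_x,k_y)\frac{d}{dk_x}+A^*e^{-ik_x}$ on $L^2(S^1,\C^n)$, discreteness is classical ODE theory (invertibility of $H(k_x,k_y)$, i.e.\ $0$ in the bulk gap), the spectrum is characterized by the monodromy $U_E(2\pi,k_y)$ having eigenvalue $1$, and a large-$E$ adiabatic approximation of the monodromy produces the explicit asymptotic spectrum $E(m,j,k_y)$, whose winding is read off directly from the Berry phases $\theta_j$; the split between $\widetilde H_{\mathrm{edge}}$ and $\widetilde H_{\mathrm{edge}}^-$ is done by analyzing signs of Fourier modes. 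Your route stays in position space: the a priori bound $\|\hat x\psi\|\le C(\|\widetilde H_{\mathrm{edge}}\psi\|+\|\psi\|)$ (valid because the near-zero spectral subspace of $H_{\mathrm{edge}}$ is finite rank and exponentially localized) gives compact resolvent; then a bounded homotopy to $\hat x^{1/2}H_{\mathrm{edge}}\hat x^{1/2}$, the kernel intertwining $\phi\mapsto\hat x^{1/2}\phi$, and first-order perturbation theory reduce the spectral flow to that of the sharp-cutoff $H_{\mathrm{edge}}$, which you then identify with the Hall conductance via Hatsugai--Graf. What your approach buys is a conceptual explanation of \emph{why} the spectral flow is the edge index, without any adiabatic machinery; what it gives up is self-containedness (the bulk--edge correspondence is imported rather than derived) and the quantitative asymptotics of Theorem~\ref{Formula}, which the paper gets for free and which also settle the fiducial-line and discreteness questions explicitly.

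Three points need shoring up. First, your "main obstacle" is less severe than you suggest: for a \emph{loop} of self-adjoint operators with compact resolvent, spectral flow through a fiducial line needs no global breaking point (it is defined via local spectral projections and a partition of $S^1$, cf.\ \cite{Phillips}) and is independent of the fiducial line, since the eigenvalue count in any window returns to its initial value after one period. This independence is also what reconciles your computation (flow through $\lambda=0$) with the paper's (winding at large positive energy), and it deserves an explicit statement. Second, the homotopy invariance must be applied in the right topology: all the operators $s\widetilde H_{\mathrm{edge}}(k_y)+(1-s)\widehat H(k_y)$ share the domain $\{\psi:\hat x\psi\in\ell^2\}$ and differ by symmetric bounded terms, so the family is norm-resolvent continuous in $(s,k_y)$; say this. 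Third, the sign-matching step assumes simple transversal zero crossings of $\widehat H(k_y)$; degenerate or tangential crossings must be removed by a genericity/perturbation argument, and the orientation convention in Hatsugai--Graf for the half-plane $x\ge 1$ must be checked so that the answer is $+\kappa$ rather than $-\kappa$. None of these is a fatal gap.
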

The proof is based on an high-energy estimates for the spectrum. For large positive energies the spectrum of $\widetilde{H}(k_y)$ is close to the set of energies
 \begin{gather}\label{approxSpectrum}
	 E(m,j, k_y) = \frac{2 \pi m + \theta_j(2\pi, k_y) + \int_{0}^{2\pi} E_j(k_x, k_y)^{-1} e^{-i k_x} \braket{\psi_j(k_x, k_y) | A^* | \psi_j(k_x, k_y)} dk_x}{\int_{0}^{2\pi} E_j(k_x, k_y)^{-1} dk_x}
  \end{gather}
  with  $m \in \Z$, and  $1 \leq j \leq n$. We write $\sigma_{approx}(k_y) := \{ E(m,j, k_y), m \in \Z,\, 1 \leq j \leq n$ \}.
\begin{theorem}\label{Formula}
There exists a constant $C$ such that for any positive $E \in \sigma(\widetilde{H}(k_y))$,
$$
\mathrm{dist} (E, \sigma_{approx}(k_y)) \leq C E^{-1},
$$
and for any positive $E \in \sigma_{approx}(k_y)$,
$$
\mathrm{dist} (E, \sigma(\widetilde{H}(k_y))) \leq C E^{-1}.
$$

The large positive eigenvalues of $\widetilde{H}_{\text{edge}}(k_y)$ are close, in the above sense, to  
  $E(m,j, k_y)$ for $m > 0$, $1 \leq j \leq k-1$.
\end{theorem}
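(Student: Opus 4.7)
The plan is to pass to the momentum representation in the $x$-direction, in which $\widetilde{H}(k_y)$ becomes a first-order $n \times n$ matrix ODE on $S^1$, and then perform adiabatic analysis in the limit $E \to \infty$. A direct Fourier computation with $\widetilde{H} = \sum_{x} x H_x$ yields
\[
\widetilde{H}(k_y) \;=\; H(k_x, k_y)\, i\partial_{k_x} \;-\; A(k_y)\, e^{i k_x}
\]
acting on $L^2(S^1, \mathbb{C}^n)$. Since $H(k_x, k_y)$ is invertible by Assumption~\ref{assumption:bulk}, the eigenvalue equation $\widetilde{H}(k_y)\hat\psi = E\hat\psi$ rearranges to
\[
i \partial_{k_x} \hat\psi \;=\; H^{-1}(k_x, k_y)\bigl[E + A(k_y) e^{i k_x}\bigr] \hat\psi,
\]
so that $E \in \sigma(\widetilde H(k_y))$ if and only if the monodromy $U(k_y;E) \in \mathrm{End}(\mathbb{C}^n)$ of this ODE around $[0,2\pi]$ has $1$ as an eigenvalue.

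For large $E$ I would apply adiabatic theory to the generator $E H^{-1} + H^{-1} A e^{i k_x}$. Its eigenvalues $E E_j^{-1} + O(1)$ are separated by a gap of order $E$ (Assumption~\ref{assumption:bulk}), while the inter-band couplings coming from $\partial_{k_x} \varphi_j$ and from the order-one perturbation $H^{-1} A e^{i k_x}$ remain uniformly bounded. Standard Kato--Nenciu estimates then give $U(k_y;E) = U_{\mathrm{diag}}(k_y;E) + O(1/E)$, with $U_{\mathrm{diag}}$ diagonal in the basis $\{\varphi_j(0, k_y)\}$; its $j$-th diagonal phase collects the WKB integral $E \int_0^{2\pi} E_j^{-1}\, dk_x$, the first subleading correction $\int_0^{2\pi} E_j^{-1} e^{-i k_x} \langle \varphi_j \mid A^* \mid \varphi_j \rangle\, dk_x$ produced by $H^{-1} A e^{i k_x}$, and the gauge factor $e^{i \theta_j(k_y)}$ inherited from $\varphi_j(2\pi, k_y) = e^{i \theta_j} \varphi_j(0, k_y)$. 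Requiring the total phase to equal $2\pi m$ reproduces the formula~(\ref{approxSpectrum}) with $O(1/E)$ residual.

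The two-sided closeness in the first part of the theorem then follows by standard quasimode arguments. In one direction, for each $E(m, j, k_y)$ the explicit adiabatic ansatz $\hat\psi_{m,j}$ (the product of $\varphi_j$ with the WKB phase) satisfies $\|(\widetilde{H}(k_y) - E(m,j,k_y)) \hat\psi_{m,j}\| \leq (C/E)\|\hat\psi_{m,j}\|$, placing a true eigenvalue within $O(1/E)$. In the other direction, if $E \in \sigma(\widetilde H(k_y))$ then $1 \in \sigma(U(k_y;E))$; comparing with $U_{\mathrm{diag}}(k_y;E)$ forces one of its diagonal phases to be within $O(1/E)$ of unity, which unwinds to $\mathrm{dist}(E, \sigma_{\mathrm{approx}}(k_y)) = O(1/E)$. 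The edge statement then follows from localization: the quasimode $\hat\psi_{m,j}$ is semiclassically concentrated in position space around $x \sim E/E_j(k_x, k_y)$, so for large positive $E$ it lives in $x \geq 0$ only for the sign-matched pairs $(m,j)$ singled out in the theorem. Smoothly restricting a bulk quasimode to $\mathbb{N}$ costs only an exponentially small norm, and conversely every large-positive-energy eigenfunction of $\widetilde{H}_{\mathrm{edge}}(k_y)$ is concentrated far from the boundary and is well-approximated by a bulk quasimode.

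The main obstacle is the adiabatic step. The generator $E H^{-1} + H^{-1} A e^{i k_x}$ is not of the conventional semiclassical form, since the $O(1)$ piece $H^{-1} A e^{i k_x}$ contributes at the same order as the Berry connection and must be carried through to first subleading order---it is precisely this term that produces the correction integral in~(\ref{approxSpectrum}). The non-commutativity of $H^{-1}(k_x, k_y)$ at different $k_x$ in the time-ordered exponential has to be handled through the adiabatic projector expansion rather than a naive diagonal integration. Finally, promoting an $O(1/E)$ operator-norm bound on $U(k_y;E) - U_{\mathrm{diag}}(k_y;E)$ to an $O(1/E)$ bound on the positions of zeros of $\det(U(k_y;E) - 1)$ requires an implicit function argument applied to $\det(U_{\mathrm{diag}}(k_y;E) - 1)$, whose zeros are exactly $\sigma_{\mathrm{approx}}(k_y)$.
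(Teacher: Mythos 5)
Your overall strategy is the same as the paper's: Fourier transform to the first-order ODE $i H(k_x,k_y)\partial_{k_x} + (\text{order-one term})$ on $S^1$, characterize $\sigma(\widetilde{H}(k_y))$ by the monodromy having eigenvalue $1$, and run an adiabatic approximation with $O(E^{-1})$ error in which the order-one anchoring term enters at the same order as the Berry connection. Your quasimode argument for the direction $\sigma_{approx}\subset_{O(1/E)}\sigma(\widetilde H)$ (self-adjointness gives $\mathrm{dist}(E,\sigma)\le\|(\widetilde H-E)\psi\|/\|\psi\|$) is a clean alternative to the paper's route, which goes entirely through the monodromy: there the comparison operator $\widetilde U_E(2\pi,k_y)$ is shown to be \emph{normal} (diagonal in the orthonormal frame $\varphi_j(0,k_y)$), Bauer--Fike gives eigenvalue closeness, and holomorphy plus a uniform lower bound on $|\int E_j^{-1}dk_x|$ converts closeness of the monodromy eigenvalue to $1$ into closeness of $E$ to some $E(m,j,k_y)$. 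You should state the normality explicitly in your converse direction; without it an $O(1/E)$ perturbation of a non-normal matrix could move eigenvalues by much more than $O(1/E)$.

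The genuine soft spot is the edge assignment. The paper's mechanism is exact block-diagonality: since $H_0$ carries weight $g(0)=0$, there is no coupling across $x=0$ and $\sigma(\widetilde H)=\sigma(\widetilde H_{\mathrm{edge}})\cup\sigma(\widetilde H_{\mathrm{edge}}^{-})$; one then only needs to decide which block each eigenvalue lies in. This is done by showing that for $E>0$ near $E(m,j,k_y)$ with $E_j<0$, the negative Fourier modes of the approximate eigenfunction $e^{-iEc_j(k_x)}g_j(k_x)$, $c_j'=E_j^{-1}$, are $O(E^{-k})$ by non-stationary phase (the phase derivative $n+Ec_j'$ never vanishes for $n<0$), while the total $L^2$ norm is bounded below; hence the eigenfunction cannot lie in $\ell^2(\Z_{\le 0}\times\Z,\C^n)$. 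Your version has two problems. First, ``exponentially small'' is not available: the true eigenfunction agrees with the quasimode only to $O(E^{-1})$ (the adiabatic error), so its tail on the wrong half-lattice is only $O(E^{-1})$, which is why the lower bound on $\|\psi\|_{L^2}$ is needed to close the argument rather than a smallness-of-restriction estimate alone. Second, your converse (``every large positive eigenfunction of $\widetilde H_{\mathrm{edge}}(k_y)$ is well-approximated by a bulk quasimode'') is asserted rather than derived; with the block decomposition it is immediate, since $\sigma(\widetilde H_{\mathrm{edge}})\subset\sigma(\widetilde H)$ and the branches with $E_j>0$ are assigned to the other block by the mirrored stationary-phase sign. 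Also note two sign slips: with the paper's anchoring convention the first-order term is $+A^*(k_y)e^{-ik_x}$ (consistent with the correction integral in \eqref{approxSpectrum}, which your $-A e^{ik_x}$ would not reproduce), and the concentration point is $x\sim -E/E_j>0$, not $E/E_j$.
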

As an example, consider the Harper-Hofstadter model,
where the Hamiltonian is given by 
\begin{gather}
  (H(k_y) \psi)_m = \psi_{m-1} + \psi_{m+1} + 2 \cos(2\pi m \alpha - k_y)
\end{gather}
for $\alpha$ a free parameter. 
We take $\alpha = 1/3$. Then the Fourier transform of $H$ looks like 
\begin{gather*}
  H(k_x, k_y) = \begin{pmatrix} 2 \cos\left(\frac{4 \pi}{3} - k_y \right) & 1 & e^{-i k_x} \\
    1 & 2 \cos(k_y) & 1 \\
    e^{ik_x} & 1 & 2 \cos\left( \frac{2 \pi}{3} - k_y \right) \end{pmatrix} - E_F
    = V + A e^{i k_x} + A^* e^{-i k_x}
\end{gather*}
where 
\begin{gather*}
  V = \begin{pmatrix} 2 \cos\left(\frac{4 \pi}{3} - k_y \right) - E_F & 1 & 0 \\
    1 & 2 \cos(k_y) - E_F & 1 \\
    0 & 1 & 2 \cos\left( \frac{2 \pi}{3} - k_y \right) - E_F \end{pmatrix} \\
  A = \begin{pmatrix}
    0 & 0 & 0 \\ 0 & 0 & 0 \\ 1 & 0 & 0
  \end{pmatrix}.
\end{gather*}
There are two gaps seen in Figure~\ref{Harper}; Fermi energies $E_F = \pm 1.5$ lie in these gaps. 
\begin{figure}[h!]
  \centering
  \includegraphics[totalheight=4cm]{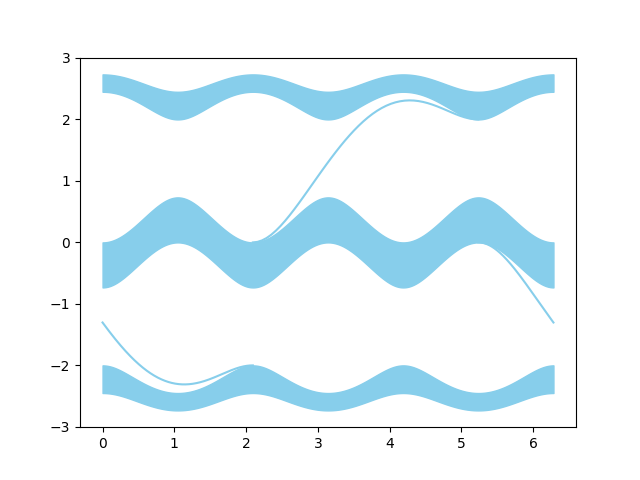}
  \caption{The spectrum of the Harper-Hofstadter model on a half plane with $\alpha = 1/3$. The model has three bands. The lines connecting the bands are the edge modes. The Chern numbers of these bands are $-1, 2, -1$.}
  \label{Harper}
\end{figure}
In Figure~\ref{Approx Spectrum} we demonstrate how \eqref{approxSpectrum} depends on  which bands are below the Fermi energy. In our theorems, we set $E_F=0$ for convenience, translating a given Fermi energy to that setting means considering the energy shift $H \to H - E_F$.
\begin{figure}[h!]
    \centering
  \includegraphics[totalheight=4cm]{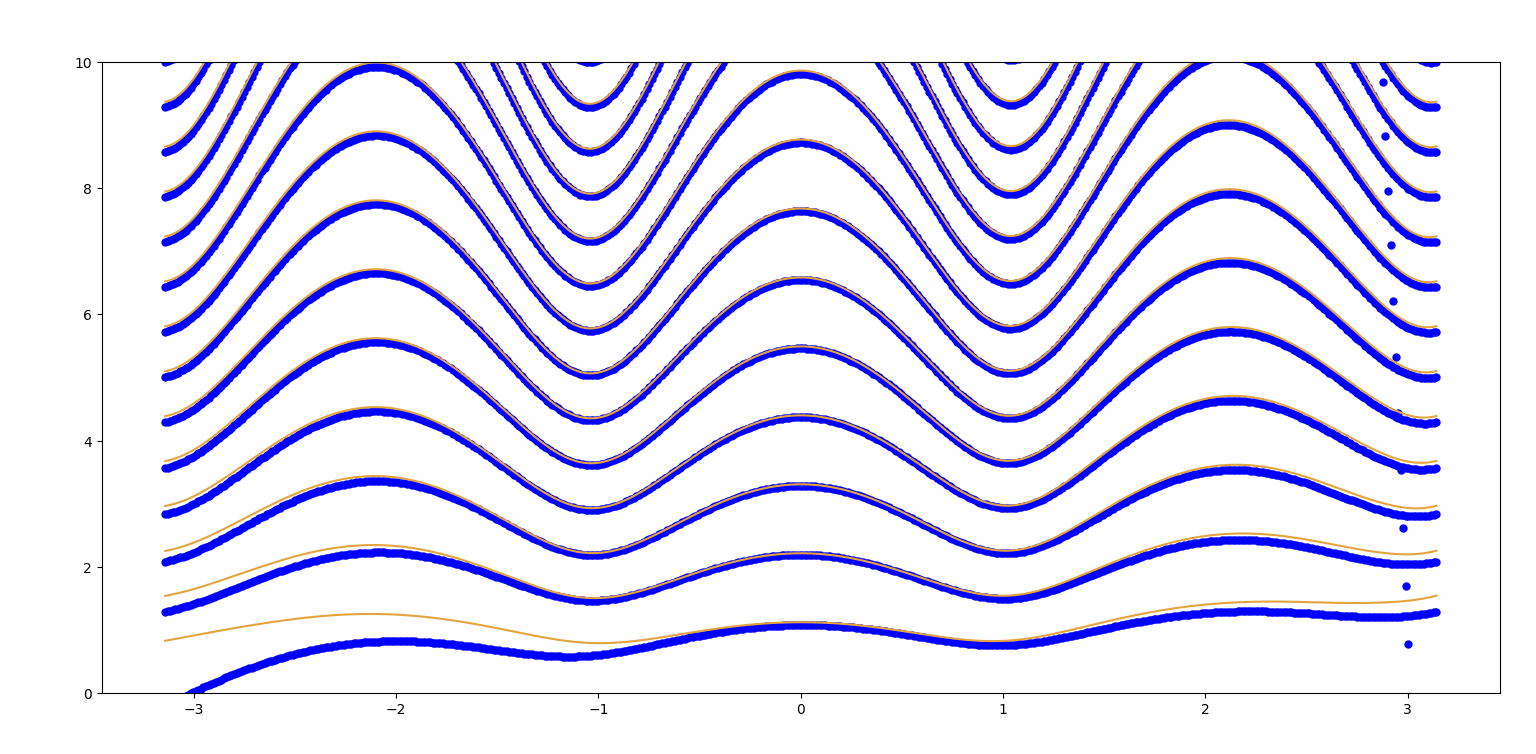}
  \includegraphics[totalheight=4cm]{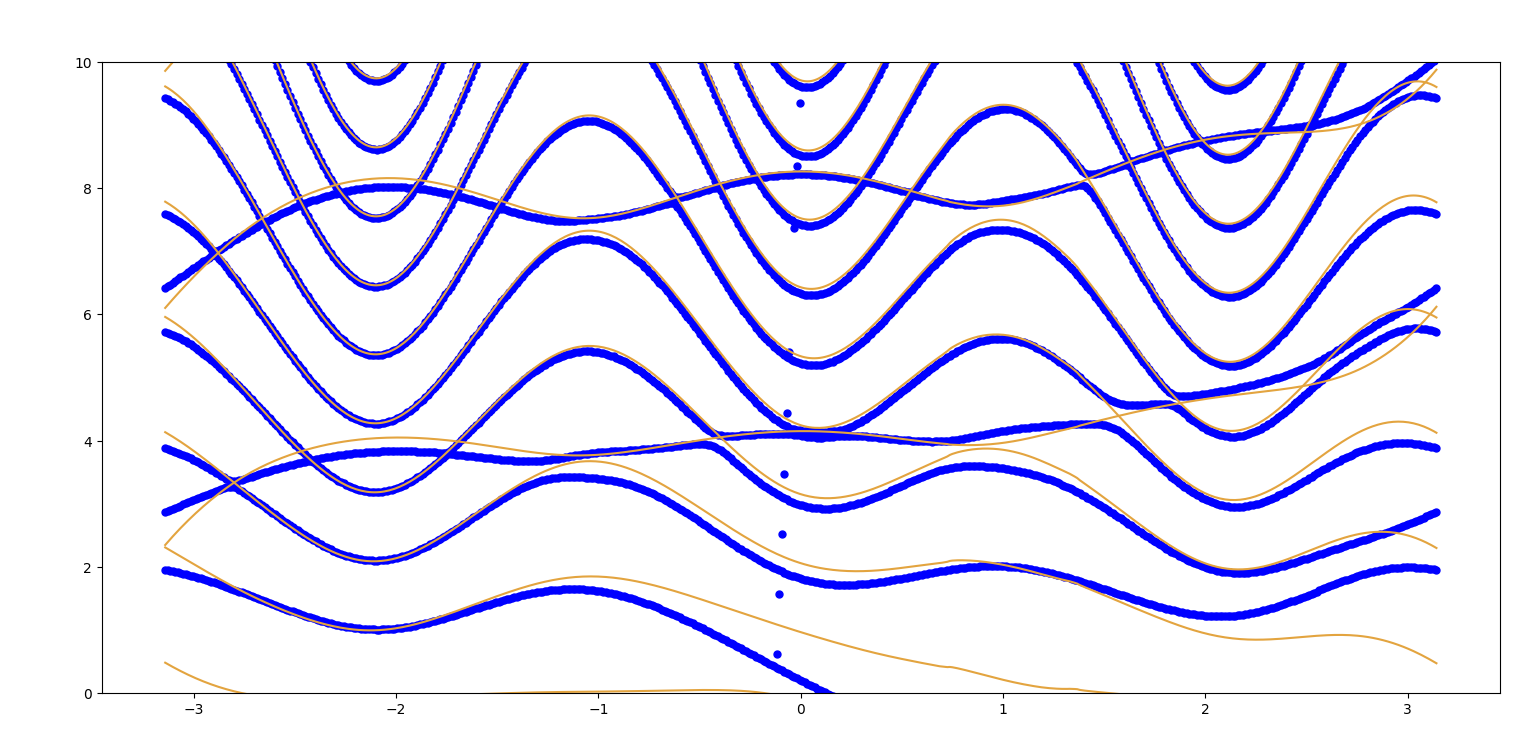}
  \caption{The spectrum of $\widetilde{H}$ at $E_F = 1.5$, right figure, and $E_F = -1.5$, left figure, in blue,
  compared to \eqref{approxSpectrum} in orange.}
  \label{Approx Spectrum}
\end{figure}
Numerics for this are discussed in Section~\ref{Numerics}.

\section{Fourier Transform of the Operator $\widetilde{H}$}
Throughout this section we suppress the $k_y$ dependence.
Let $H = \sum_{x} H_x$ be given as in Definition \ref{Htildedef}.
The Fourier transform of $H_0$ is an operator with integral kernel $H_0(k_x, k_x')$ acting on $L^2(S^1, \mathbb{C}^n)$.
For our choice of $H_0$,
it is given by\footnote{In general, we may have \begin{align*}
  H_0(k_x, k_x') &= \sum_{-N' < m < N} V_m e^{i m (k_x - k_x')}
  + \sum_{M' < m < M} A_m e^{i m k_x} e^{-i (m-1) k_x'} + A_m^* e^{i m k_x'} e^{-i (m-1) k_x }
\end{align*} where $\sum_{m} V_m = V$ and $\sum_{m} A_m = A$. Unless $A_m = 0$ for $m \neq 0$ or $m \neq 1$ defining an edge is difficult.}
\begin{align*}
  H_0(k_x, k_x') &= V
  + A e^{i k_x}  + A^* e^{- i  k_x'},
\end{align*}
for some $n \times n$ Hermitian matrix $V$, and arbitrary $n \times n$ matrix $A$.
We see that the Fourier transform of $H$ is 
\begin{gather*}
  \mathcal{F} H \mathcal{F}^* 
  = \frac{1}{2\pi} \sum_{m} e^{-i k_x m} H_0(k_x, k_x') e^{i k_x' m} 
  = \delta(k_x - k_x') H_0(k_x, k_x').
\end{gather*}
So we get that $H(k_x) = H_0(k_x, k_x)$.
Now, the Fourier transform of $\widetilde{H}$ is 
\begin{align*}
  \widetilde{H} = \frac{1}{2\pi} \sum_{x} x e^{-i k_x x} H_0(k_x, k_x') e^{i k_x' x}
  &= -i \delta'(k_x - k_x') H_0(k_x, k_x') \\
  &= i \delta(k_x - k_x') \partial_{k_x'} H_0(k_x, k_x') \\
  &= i H(k_x) \frac{d}{dk_x} + i \partial_{k_x'} H_0(k_x, k_x')|_{k_x = k_x'}.
\end{align*}
We are using the same notation for $\widetilde{H}$ and it's Fourier transform, we think that the chance of confusion is small. Explicitly, this is the operator 
\begin{gather}\label{Htilde}
  \widetilde{H} = i H(k_x) \frac{d}{dk_x} 
  + A^* e^{-i k_x}.
\end{gather}
By Assumption~\ref{assumption:bulk}, $H$ is nondegenerate and standard ODE theory \cite{Weidmann_1987} gives that $\widetilde{H}$ is self-adjoint on the definition domain $H^1(S^1, \mathbb{C}^n)$.
\begin{lemma}
  The spectrum of $\widetilde{H}$ is discrete, and we may find eigenfunctions realizing these eigenvalues.
\end{lemma}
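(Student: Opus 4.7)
The plan is to reduce the eigenvalue equation $\widetilde{H}\psi = E\psi$ on $L^2(S^1,\C^n)$ to a linear first-order ODE on $[0,2\pi]$ and then analyze its periodic solutions via a monodromy argument. From the explicit formula (\ref{Htilde}) the eigenvalue equation reads
\[
i H(k_x)\, \psi'(k_x) + A^* e^{-i k_x}\, \psi(k_x) = E\, \psi(k_x).
\]
Assumption~\ref{assumption:bulk} places the Fermi energy $E_F = 0$ strictly inside a gap, so the eigenvalues of $H(k_x)$ are uniformly bounded away from $0$; hence $H(k_x)$ is invertible for every $k_x$ and $H(k_x)^{-1}$ is smooth in $k_x$. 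Multiplying through by $-i H(k_x)^{-1}$ recasts the equation as
\[
\psi'(k_x) = -i\, H(k_x)^{-1}\bigl(E - A^* e^{-i k_x}\bigr)\,\psi(k_x),
\]
a regular linear ODE system with smooth $k_x$-dependence and entire dependence on the spectral parameter $E$.

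Next I would introduce the monodromy matrix $M(E) \in \mathrm{GL}(n,\C)$ defined by $M(E)\,\psi(0) = \psi(2\pi)$, where $\psi$ is the unique solution of the ODE with prescribed initial data. Standard ODE theory implies that $E \mapsto M(E)$ is entire. A solution of the ODE descends to an element of the domain $H^1(S^1,\C^n)$ if and only if $\psi(2\pi) = \psi(0)$, i.e.\ $\psi(0) \in \ker(M(E) - I)$. Consequently
\[
\sigma(\widetilde{H}) = \{ E \in \C : D(E) := \det(M(E) - I) = 0 \},
\]
and for every such $E$ the eigenspace is canonically isomorphic to $\ker(M(E) - I)$, in particular of dimension at most $n$; the associated eigenfunctions are obtained explicitly by transporting an initial vector in $\ker(M(E)-I)$ around the circle.

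It remains to show that $D$ is not identically zero, for then its zero set is discrete as the zero set of a non-trivial entire function, giving discreteness of $\sigma(\widetilde{H})$ together with the eigenfunctions just constructed. This is immediate from the self-adjointness of $\widetilde{H}$ on $H^1(S^1,\C^n)$: the spectrum is real, so $D(E) \neq 0$ for every $E \in \C \setminus \R$, and in particular $D \not\equiv 0$. The main (and really the only) point to check carefully is the very first step, namely that $H(k_x)$ is uniformly invertible so that the ODE is globally non-degenerate and the monodromy is well-defined for all $k_x \in [0,2\pi]$ and all $E \in \C$; once that is in hand, both the discreteness of the spectrum and the existence of eigenfunctions follow from the classical Floquet-type argument sketched above.
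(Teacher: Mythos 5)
Your argument is correct and is precisely the ``standard ODE theory'' that the paper invokes (citing Section~7 of Weidmann) without spelling out: uniform invertibility of $H(k_x)$ from the gap assumption, reduction to a first-order system with monodromy $M(E)$ entire in $E$, and discreteness of the zero set of $D(E)=\det(M(E)-I)$, which is non-trivial because self-adjointness rules out non-real zeros. The only step you leave implicit is that $D(E)\neq 0$ actually places $E$ in the resolvent set (via the Green's-function construction of $(\widetilde{H}-E)^{-1}$, a compact integral operator), which is what upgrades ``the eigenvalues form a discrete set'' to ``the spectrum is discrete''; this too is part of the classical Floquet-type argument you are invoking, so the proof matches the paper's.
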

\begin{proof}
  Since $H$ is invertible for all $k_x$ this is again standard ODE theory, see e.g. Section 7 of \cite{Weidmann_1987}.
\end{proof}
Recall that $\widetilde{H}_{\mathrm{edge}}$ is a restriction of $\widetilde{H}$ to $\ell^2(\mathbb{N} \times \mathbb{Z}, \mathbb{C}^n)$. We also define $\widetilde{H}_{\text{edge}}^{-}$ to be the restriction of $\widetilde{H}$ to  $\ell^2(\Z_{\leq 0} \times \Z, \C^n)$. Since $\widetilde{H}$ is self-adjoint we have the corresponding decomposition of spectra.
\begin{lemma}\label{decomposition}
  We have a decomposition $\sigma(\widetilde{H}(k_y)) = \sigma(\widetilde{H}_{\text{edge}}(k_y)) \cup \sigma(\widetilde{H}_{\text{edge}}^{-}(k_y))$.
\end{lemma}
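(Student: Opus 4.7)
The plan is to show directly that $\widetilde{H}$ is a block diagonal operator with respect to the orthogonal decomposition
\[
\ell^2(\Z \times \Z, \C^n) = \ell^2(\N \times \Z, \C^n) \oplus \ell^2(\Z_{\leq 0} \times \Z, \C^n),
\]
and then invoke the standard fact that the spectrum of an orthogonal direct sum of self-adjoint operators is the union of their spectra.

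First I would check that $\widetilde{H}$ preserves this decomposition. The only term in $H = \sum_x H_x$ whose support straddles the boundary between $\N \times \Z$ and $\Z_{\leq 0} \times \Z$ is $H_0$, which consists of on-site terms at $x = 0$ and of the hopping between $x=0$ and $x=1$. In $\widetilde{H} = \sum_x x H_x$, however, this boundary-crossing term carries the weight $0\cdot H_0 = 0$ and is therefore absent. For every other $x$, the support of $H_x$ is $\{x, x+1\}$, and this lies entirely in $\N$ when $x \geq 1$ and entirely in $\Z_{\leq 0}$ when $x \leq -1$. Hence $\widetilde{H}$ splits as $\widetilde{H}_{\text{edge}} \oplus \widetilde{H}_{\text{edge}}^{-}$ on a common invariant core, and the same is true fiber by fiber after partial Fourier transform in the $y$-direction.

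Next I would promote this algebraic splitting to an operator-theoretic direct sum. The self-adjointness of $\widetilde{H}(k_y)$ on $H^1(S^1, \C^n)$ has already been established via the ODE form \eqref{Htilde}; restricting instead to $\ell^2(\N, \C^n)$ or $\ell^2(\Z_{\leq 0}, \C^n)$ replaces the closed loop by a half-line problem, and the same ODE argument (Section~7 of \cite{Weidmann_1987}) yields self-adjointness of each half, with domain given by sequences $\psi$ such that $\sum_x \|\psi(x)\|^2 x^2 < \infty$ on the respective half. The sum of the two domains clearly coincides with the domain of $\widetilde{H}(k_y)$, so the operator identity $\widetilde{H}(k_y) = \widetilde{H}_{\text{edge}}(k_y) \oplus \widetilde{H}_{\text{edge}}^{-}(k_y)$ holds on the nose.

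Finally, the claim on spectra follows from the general identity $\sigma(A \oplus B) = \sigma(A) \cup \sigma(B)$ for self-adjoint $A, B$; both pieces have discrete spectrum by the preceding lemma, so the union needs no closure. There is no real obstacle here; the only thing that needs care is to verify that the weight $0$ multiplying $H_0$ really does kill every matrix element coupling the two subspaces, which is the content of the footnote remark about $H_1$ in Definition~\ref{Htildedef}.
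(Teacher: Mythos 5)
Your proposal is correct and takes essentially the same route as the paper, which states this lemma without a written proof beyond the one-line remark that $\widetilde{H}$ is self-adjoint and splits across the cut: the key point in both cases is that the coupling term $H_0$ carries weight $0\cdot H_0=0$ in $\widetilde{H}=\sum_x x H_x$, so the operator is an orthogonal direct sum $\widetilde{H}_{\text{edge}}\oplus\widetilde{H}_{\text{edge}}^{-}$ and the spectra unite. Your expanded write-up just makes explicit what the paper leaves implicit; the aside about discreteness is unnecessary, since $\sigma(A\oplus B)=\sigma(A)\cup\sigma(B)$ holds for any finite direct sum of self-adjoint operators.
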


\section{Large Energy Asymptotics}
We primarily focus on the Fourier representation of $\widetilde{H}$.
We first establish an asymptotic formula for the spectrum of $\widetilde{H}(k_y)$.
We then establish an asymptotic formula for the eigenfunctions of $\widetilde{H}(k_y)$ away from degeneracy.
Finally, we distinguish which branches of the spectrum belong to $\widetilde{H}_{\text{edge}}$.

For simplicity, we write $M(k_x, k_y)$ for $A^*(k_y) e^{-i k_x}$.
The eigenvalue equation for $\widetilde{H}(k_y)$ is
\begin{gather}
  i H(k_x, k_y) \frac{d}{dk_x} \psi(k_x, k_y, E) + M(k_x, k_y) \psi(k_x, k_y, E) = E \psi(k_x, k_y, E).
\end{gather}
Since $H$ is nonsingular, we may write this in the form 
\begin{gather}
  i \frac{d}{dk_x} \psi(k_x, k_y, E) = (E H^{-1}(k_x, k_y) - H^{-1}(k_x, k_y) M(k_x, k_y)) \psi(k_x, k_y, E)
\end{gather}
which has solution 
\begin{gather}
  \psi(k_x, k_y, E) = U_E(k_x, k_y) \psi(0, k_y, E)
\end{gather}
for 
\begin{gather}
  U_E(k_x, k_y) \coloneq \mathcal{T} \exp \left(-i \int_{0}^{k_x} E H^{-1}(t, k_y)- H^{-1}(t, k_y) M(t, k_y) dt \right).
\end{gather}
The definition domain of $\widetilde{H}(k_y)$ requires that $\psi(0, k_y, E) = \psi(2 \pi, k_y , E)$. A value of $E$ is then in the spectrum of $\widetilde{H}(k_y)$ if and only if $U_E(2\pi, k_y)$ has an eigenvalue of $1$.
If the corresponding eigenvector is $\psi$, we may write our solution as 
\begin{gather}\label{solutionEq}
  \psi(k_x, k_y, E) = U_E(k_x, k_y) \psi.
\end{gather}

We need a few technical lemmas to analyze the spectrum of $U_E(2\pi, k_y)$. For the next lemma, we recall that $H(k_x, k_y)$ has
  eigenvalues $E_j(k_x, k_y)$ and corresponding eigenvectors $\varphi_j(k_x, k_y)$. We denote $W(k_x, k_y)$ the parallel transport from $\varphi_j(0, k_y)$ to $\varphi_j(k_x, k_y)$, generated by Kato's parallel transport genrator,
  $$
  A(s) = \sum_{j = 1}^{n} \frac{i}{2} [\dot{P}_{j}(s,k_y), P_{j}(s, k_y)],
  $$
   where $P_j(k_x, k_y)$ the projection on $\varphi_j(k_x, k_y)$.
\begin{lemma} \label{adiabatic}
  Let $E > 0$\footnote{For $E < 0$, there is a negative sign on the terms $G_Z$ and $A$}. 
Let 
  \begin{gather}
    H_W(k_x, k_y) 
  = \sum_{j = 1}^{n} P_j(0, k_y) \int_{0}^{k_x} (E E_j(t, k_y)^{-1} -  E_j(t, k_y)^{-1} \braket{\varphi_j(t,k_y)|M(t,k_y)|\varphi_j(t,k_y)})dt
  \end{gather}
  and define 
  \begin{gather}
    \widetilde{U}_{E}(k_x, k_y) = W(k_x, k_y) e^{-i H_W(k_x, k_y)}.
  \end{gather}
  Then 
  \begin{gather}
    \norm{U_{E}(k_x, k_y) - \widetilde{U}_{E}(k_x, k_y)} \leq C E^{-1},
  \end{gather}
  for $C$ independent of $E, k_x, k_y$.
\end{lemma}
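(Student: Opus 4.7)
The plan is to run a standard adiabatic-type argument in which $E\to\infty$ plays the role of an inverse semiclassical parameter; the ansatz $\widetilde U_E = W\,e^{-iH_W}$ is exactly the leading-order adiabatic approximation, with $W$ supplying the parallel transport of the spectral projections $P_j$ of $H$, and $e^{-iH_W}$ supplying the dynamical phase from $EH^{-1}$ together with the first-order correction from $-H^{-1}M$.

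First I would introduce the residual $R(k_x) := \widetilde U_E(k_x)^{-1} U_E(k_x)$, which satisfies $R(0)=I$ and $i\dot R = L R$ with generator
\[
L \;=\; \widetilde U_E^{-1}\bigl(K \widetilde U_E - i \dot{\widetilde U}_E\bigr), \qquad K \;=\; E H^{-1} - H^{-1} M.
\]
Because $H_W(k_x) = \sum_j P_j(0,k_y)\phi_j(k_x)$ is diagonal in the orthonormal basis $\{\varphi_j(0,k_y)\}$ (so $H_W$ commutes with $\dot H_W$, and $e^{-iH_W}$ differentiates cleanly), and because the Kato intertwining gives $W^{-1} P_j(k_x) W = P_j(0)$, a direct computation factors
\[
L \;=\; e^{iH_W}\bigl[-(DN)_{\mathrm{off}} \,-\, iW^{-1}\dot W\bigr]\,e^{-iH_W},
\]
where $D(k_x) = \sum_j E_j^{-1}(k_x)\,P_j(0)$, $N(k_x) = W^{-1}(k_x)M(k_x)W(k_x)$, and $(\cdot)_{\mathrm{off}}$ is the off-diagonal part in the basis $\{\varphi_j(0,k_y)\}$. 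The formula for $H_W$ was chosen precisely to cancel the diagonal of $W^{-1}(EH^{-1}-H^{-1}M)W$, so the inner bracket is $E$-independent and uniformly bounded.

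Next I would observe that $W^{-1}\dot W$ also has vanishing diagonal in $\{\varphi_j(0,k_y)\}$: a short calculation using $P_k|\varphi_j\rangle = \delta_{jk}|\varphi_j\rangle$ yields $\langle\varphi_j(k_x)|[\dot P_k, P_k]|\varphi_j(k_x)\rangle = 0$. Consequently every nonzero entry of $L$ has the oscillatory form $b_{jk}(k_x)\,e^{i(\phi_j-\phi_k)(k_x)}$ with $j\neq k$ and $b_{jk}$ bounded uniformly in $E$, and the phase derivative
\[
\tfrac{d}{dk_x}(\phi_j-\phi_k) \;=\; E\bigl(E_j^{-1}(k_x,k_y) - E_k^{-1}(k_x,k_y)\bigr) + O(1)
\]
is bounded below in absolute value by a constant times $E$, with a fixed sign in $k_x$ since the bands do not cross under Assumption~\ref{assumption:bulk}. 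One integration by parts then gives $\bigl\|\int_0^{k_x} L(t)\,dt\bigr\| = O(E^{-1})$ uniformly on the compact torus $k_x\in[0,2\pi]$, $k_y\in[0,2\pi)$. Feeding this bound into the Volterra equation $R = I - i\int_0^{\cdot} L R\,dt$ and invoking Gronwall yields $\|R-I\| = O(E^{-1})$. Since $W$ is unitary and $\mathrm{Im}\,\phi_j = O(1)$ uniformly in $E$ (the $E$-dependent leading term of $\phi_j$ is real), $\widetilde U_E$ is bounded uniformly in $E$, and the lemma follows from $U_E - \widetilde U_E = \widetilde U_E(R-I)$. The $E<0$ case is identical modulo the sign changes flagged in the footnote.

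The main obstacle is confirming the off-diagonality of $W^{-1}\dot W$ in the $\{\varphi_j(0,k_y)\}$ basis for the specific Kato generator $A=\tfrac{i}{2}\sum_j[\dot P_j, P_j]$, and in parallel keeping track that every constant in the non-stationary-phase estimate is uniform in $k_y$; once those are in hand, the rest reduces to a routine Gronwall estimate.
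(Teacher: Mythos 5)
Your argument is correct and is essentially the same adiabatic comparison the paper runs by appealing to Theorem~5 of \cite{AtoZ}: the generator of $R=\widetilde U_E^{-1}U_E$ is $E$-independent and off-diagonal (because $H_W$ absorbs the diagonal of $W^{-1}(EH^{-1}-H^{-1}M)W$ and $W^{-1}\dot W$ has no diagonal), and it oscillates with phase derivative bounded below by $cE$ thanks to the uniform band gaps, so one integration by parts yields the $O(E^{-1})$ bound. The only glossed step is the final ``Gronwall'': turning $\bigl\|\int_0^{k_x}L\bigr\|=O(E^{-1})$ into $\|R-I\|=O(E^{-1})$ needs one more integration by parts inside the Duhamel integral (write $\int LR=\Lambda R-\int\Lambda\dot R$ with $\Lambda=\int L$, using that $L$ and $R$ are uniformly bounded), which is exactly the content of the Lemma~1 of \cite{AtoZ} that the paper invokes.
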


Denoting $W(k_x, k_y) \varphi_j(0, k_y) = e^{i \theta_j(k_x, k_y)} \varphi_j(k_x, k_y)$ we can write explicitly,
  \begin{gather}\label{adiabaticEvo}
    \widetilde{U}_E(k_x, k_y) 
    = \sum_{j = 1}^{n}  e^{{i \theta_{j}(k_x, k_y) - i E \int_{0}^{k_x} E_j(t, k_y)^{-1} (1 - \braket{\varphi_{j}(t,k_y)|M(t,k_y)|\varphi_{j}(t,k_y)})}}
    \ket{\varphi_j(k_x, k_y)} \bra{\varphi_{j}(0, k_y)}.
  \end{gather}

\begin{proof}
  This is essentially Theorem 5 of \cite{AtoZ}.
  The method of proof needs to be slightly modified.
  First note that we have uniform bounds on the norm of $U_{E}$, $U_E^{-1}$, $\widetilde{U}_E$, and $\widetilde{U}_E^{-1}$
  uniform in $k_x,k_y$ from compactness and uniform in $E$ since the non-self-adjoint part of the evolution equation is independent of $E$.
  Following the proof of their Lemma 1, set $U_1 = \widetilde{U}_{E}$ and $U_2 = U_E$, and see
  \begin{gather*}
    \int_{0}^{t} ds~U_1(t) \frac{d}{ds} [U_1(s)^{\dagger} U_2(s)]
    = U_1(t) U_1(t)^{\dagger} U_2 - U_1.
  \end{gather*}
  Let $\frac{d}{dt}U_i = H_i U_i$. Then since we have uniform control on $U_1$, $U_2$ we get the bound 
  \begin{gather}\label{boundadiabatic}
    \norm{U_1 U_1^{\dagger} U_2 - U_1} \leq C \norm{S_{21}}_{\infty}
  \end{gather}
  with 
  \begin{gather*}
    S_{21}(t) = \int_{0}^{t} ds~ [H_1^{\dagger}(s) - H_2(s)].
  \end{gather*}
  Now, going to their proof of Theorem 5, provided 
  \begin{gather}
    H_1^{\dagger}(s) = E H_0(s) + G_{Z}(s) + A(s)
  \end{gather}
  for
  \begin{gather*}
    G_{Z}(s) = \sum_{\ell = 1}^{n} P_j(k_x, k_y) H^{-1}(t) M(t) P_j(k_x, k_y) = \sum_{\ell = 1}^{n} \ket{\varphi_j(t)} \braket{\varphi_j(t) | H^{-1}(t) M(t) | \varphi_j(t)} \bra{\varphi_j(t)}
  \end{gather*}
  we can estimate the action the same way,
  because in our system the term with $E$ is self adjoint.
  To calculate $U_1 U_1^{\dagger}$ write 
  \begin{gather*}
    U_1 = \sum_{j} e^{-i a_j + b_j} \ket{\varphi_j(t)} \bra{\varphi_j(0)}
  \end{gather*}
  for $a_j + i b_j = \int \overline{\lambda_j} dt - \theta_j$, where $\overline{\lambda_j}$ the $j$th eigenvalue of $H_1^{\dagger}$.
  We may compute 
  \begin{align*}
    U_1(t) U_1(t)^{\dagger}
    &= \left(\sum_{j} e^{-i a_j(t) + b_j(t)} \ket{\varphi_j(t)} \bra{\varphi_j(0)} \right)
      \left( \sum_{k} e^{i a_k(t) + b_k(t)} \ket{\varphi_k(0)} \bra{\varphi_k(t)} \right) \\
      &= \sum_{j} e^{2 b_j(t)} \ket{\varphi_j(t)} \bra{\varphi_j(t)}.
  \end{align*}
  Factoring out the terms in the bound \eqref{boundadiabatic} we thus see 
  \begin{gather*}
    \norm{U_2 - \widetilde{U_1}} \leq C E^{-1}
  \end{gather*}
  for 
  \begin{gather*}
    \widetilde{U_1} = \sum_{j} e^{-i a_j(t) - b_j(t)} \ket{\varphi_j(t)} \bra{\varphi_j(0)}.
  \end{gather*}
  Note this has the effect of reversing the conjugation on $\lambda_j$, so
  \begin{gather*}
    \widetilde{U}_1 = \sum_{j} e^{-i \int \lambda_j dt + i \theta_j}\ket{\varphi_j(t)} \bra{\varphi_j(0)}
  \end{gather*}
  as in \eqref{adiabaticEvo}.
\end{proof}
  In particular, this gives a formula for the eigenvalues of $U_E(2\pi, k_y)$ as 
\begin{gather}\label{eigenvalEq}
  \exp\left(i \theta_j(2 \pi, k_y) -  i \int_{0}^{2 \pi} E E_j^{-1}(t, k_y) - E_j(t, k_y)^{-1} \braket{\varphi_j(t, k_y) | M(t,k_y) | \varphi_j(t, k_y)}dt\right)
\end{gather}
  where $\theta_j(2 \pi, k_y) \equiv \theta_j(k_y)$ is the Berry phase corresponding to the $j$th band.
\begin{lemma}
  Suppose that $\widetilde{U}_{E}(2\pi, k_y)$ have an eigenvalue of one and $E > 0$. Then 
  \begin{gather}
    E = E(m, j, k_y) \equiv \frac{2 \pi m + \theta_j(k_y) + \int_{0}^{2\pi} E_j(k_x, k_y)^{-1} \braket{\varphi_j(k_x, k_y)|M(k_x, k_y)|\varphi_j(k_x, k_y)} dk_x}{\int_{0}^{2\pi} E_j(k_x, k_y)^{-1} dk_x}
  \end{gather}
  for $m \in \Z$, $m$ such that this $E$ is positive, $1 \leq j < n$.
\end{lemma}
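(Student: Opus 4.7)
The plan is to evaluate the closed-form expression \eqref{adiabaticEvo} for $\widetilde{U}_E(k_x, k_y)$ at the endpoint $k_x = 2\pi$, read off its eigenvalues directly, and solve the (linear in $E$) equation obtained by forcing one of them to equal $1$. First, I would note that by Assumption~\ref{assumption:bulk} each band $E_j(\cdot, k_y)$ is nondegenerate and separated from the others, so Kato's parallel transport preserves each spectral projection and hence sends $\varphi_j(0, k_y)$ into the one-dimensional $j$-th eigenspace of $H(2\pi, k_y) = H(0, k_y)$. Combined with the normalization convention $\varphi_j(2\pi, k_y) = e^{i\theta_j(k_y)}\varphi_j(0, k_y)$ from Section~\ref{sec:setup}, this makes $\widetilde{U}_E(2\pi, k_y)$ diagonal in the orthonormal basis $\{\varphi_j(0, k_y)\}_{j=1}^n$.

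Reading off its $n$ eigenvalues from \eqref{adiabaticEvo} at $k_x = 2\pi$ reproduces precisely \eqref{eigenvalEq}. The condition that the $j$-th eigenvalue equals $1$ therefore reduces to the congruence
$$\theta_j(k_y) \;-\; E\int_0^{2\pi} E_j(t, k_y)^{-1}\, dt \;+\; \int_0^{2\pi} E_j(t, k_y)^{-1}\, \langle \varphi_j(t,k_y)| M(t,k_y) |\varphi_j(t,k_y)\rangle\, dt \;\equiv\; 0 \pmod{2\pi}.$$
Under Assumption~\ref{assumption:bulk} each $E_j$ is bounded away from zero, so the coefficient $\int_0^{2\pi} E_j(t,k_y)^{-1}\, dt$ is nonzero and the congruence becomes a linear equation admitting a unique solution $E$ for each integer. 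Relabeling the integer $-m \mapsto m$ to absorb a sign produces exactly the claimed formula for $E(m, j, k_y)$, and the hypothesis $E > 0$ simply selects those integers $m$ for which the resulting expression is positive.

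I expect no serious obstacle in executing this plan. The only point requiring any care is the first one, namely verifying that the phase $\theta_j(2\pi, k_y)$ appearing in \eqref{adiabaticEvo} genuinely coincides with the Berry phase $\theta_j(k_y)$ introduced in Section~\ref{sec:setup}; this follows from the standard fact that the phase accumulated by Kato's parallel transport around a closed loop in a gapped band is, up to our sign conventions, the Berry phase. Once this identification is made, the remainder of the proof is a one-line algebraic solution for $E$.
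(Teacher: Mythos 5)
Your proposal is correct and follows essentially the same route as the paper: read the eigenvalues of $\widetilde{U}_E(2\pi,k_y)$ off the explicit diagonal form \eqref{adiabaticEvo} (i.e.\ \eqref{eigenvalEq}), set the argument of the exponential to a multiple of $2\pi$, and solve the resulting linear equation for $E$. The extra care you take with the Berry-phase identification and the nonvanishing of $\int_0^{2\pi}E_j^{-1}\,dk_x$ is consistent with, and slightly more explicit than, the paper's one-line argument.
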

\begin{proof}
  With \eqref{eigenvalEq} for the eigenvalues, we set the argument of the exponential to $2 \pi m$ and solve for $E$.
\end{proof}
\begin{lemma}
  Let $\widetilde{U}_{E}(k_x, k_y)$ be given as above.
  Then $\widetilde{U}_{E}(2\pi, k_y)$ is normal.
\end{lemma}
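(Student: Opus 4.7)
The plan is to read off normality directly from the explicit formula \eqref{adiabaticEvo} and to exploit the fact that the eigenvectors $\varphi_j(0,k_y)$ of the Hermitian matrix $H(0,k_y)$ form an orthonormal basis of $\C^n$.

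First I would specialize \eqref{adiabaticEvo} to $k_x = 2\pi$. Using the gauge relation $\varphi_j(2\pi,k_y) = e^{i\theta_j(k_y)}\varphi_j(0,k_y)$ from Section~\ref{sec:setup}, each outer product $\ket{\varphi_j(2\pi,k_y)}\bra{\varphi_j(0,k_y)}$ collapses to a unimodular scalar times the rank-one orthogonal projection $P_j(0,k_y) = \ket{\varphi_j(0,k_y)}\bra{\varphi_j(0,k_y)}$. Absorbing this phase into the complex number arising from the exponential in \eqref{adiabaticEvo}, I obtain an identity of the form
\[
\widetilde{U}_E(2\pi,k_y) = \sum_{j=1}^{n} \lambda_j(E,k_y)\, P_j(0,k_y),
\]
where each $\lambda_j(E,k_y) \in \C$ is just a bookkeeping of the Berry phase and the dynamical-like integral appearing in \eqref{adiabaticEvo}.

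Second, I would invoke Assumption~\ref{assumption:bulk}: since $H(0,k_y)$ has $n$ nondegenerate eigenvalues, $\{\varphi_j(0,k_y)\}_{j=1}^n$ is an orthonormal basis of $\C^n$, so $\{P_j(0,k_y)\}_{j=1}^n$ is a complete family of mutually orthogonal rank-one projections, i.e. $P_j P_k = \delta_{jk} P_j$ and $\sum_j P_j = \mathrm{Id}$. The adjoint is therefore
\[
\widetilde{U}_E(2\pi,k_y)^{*} = \sum_{j=1}^{n} \overline{\lambda_j(E,k_y)}\, P_j(0,k_y),
\]
and a one-line computation gives $\widetilde{U}_E(2\pi,k_y)\,\widetilde{U}_E(2\pi,k_y)^{*} = \sum_j |\lambda_j|^2 P_j = \widetilde{U}_E(2\pi,k_y)^{*}\,\widetilde{U}_E(2\pi,k_y)$, proving normality.

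There is no substantial obstacle; the only thing to be careful about is the bookkeeping of the phase between $\varphi_j(2\pi,k_y)$ and $\varphi_j(0,k_y)$, and checking that the operators $P_j(0,k_y)$ are genuinely mutually orthogonal. The latter is where the nondegeneracy part of Assumption~\ref{assumption:bulk} is used, and it is the only input beyond \eqref{adiabaticEvo} itself.
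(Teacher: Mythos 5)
Your proposal is correct and follows essentially the same route as the paper: both arguments reduce to the observation that, because $H(0,k_y)=H(2\pi,k_y)$, the vector $\varphi_j(2\pi,k_y)$ is a unimodular multiple of $\varphi_j(0,k_y)$, so $\widetilde{U}_E(2\pi,k_y)$ is diagonal in the orthonormal frame $\{\varphi_j(0,k_y)\}$ and hence normal. Your version merely writes out the spectral decomposition and the commutator check that the paper leaves implicit.
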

\begin{proof}
  Since $H(0, k_y) = H(2\pi, k_y)$, projections onto the $\varphi_j(0, k_y)$ eigenspace and $\varphi_j(2\pi, k_y)$ are the same.
  Hence $\varphi_j(0, k_y)$ is a unitary frame which diagonalizes $\widetilde{U}_{E}(2\pi, k_y)$.
\end{proof}
\begin{lemma}\label{eigenvalClose}
  Let $U_E(2\pi, k_y)$ and $\widetilde{U}_E(2\pi, k_y)$ be given as above.
  Then for every eigenvalue $\lambda$ of $U_E(2\pi, k_y)$, there is an eigenvalue $\widetilde{\lambda}$ of 
  $\widetilde{U}(2\pi, k_y)$ such that $|\lambda - \widetilde{\lambda}| < C E^{-1}$,
  where $C$ is from Lemma \ref{adiabatic}.
\end{lemma}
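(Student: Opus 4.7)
The plan is to use a standard perturbation argument for normal operators, exploiting the fact that $\widetilde{U}_E(2\pi, k_y)$ is normal (as established in the preceding lemma) and that $U_E(2\pi, k_y)$ is close to it in operator norm by Lemma \ref{adiabatic}.

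First I would fix an eigenvalue $\lambda$ of $U_E(2\pi, k_y)$ with a unit eigenvector $\psi \in \C^n$. Applying $\widetilde{U}_E(2\pi, k_y) - \lambda$ to $\psi$ and using $U_E(2\pi, k_y) \psi = \lambda \psi$, one finds
\[
(\widetilde{U}_E(2\pi, k_y) - \lambda)\psi = (\widetilde{U}_E(2\pi, k_y) - U_E(2\pi, k_y))\psi,
\]
so that $\norm{(\widetilde{U}_E(2\pi, k_y) - \lambda)\psi} \leq C E^{-1}$ by Lemma \ref{adiabatic}.

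Next I would invoke normality. Because $\widetilde{U}_E(2\pi, k_y)$ is normal, it admits a unitary diagonalization, and for every $z \notin \sigma(\widetilde{U}_E(2\pi, k_y))$ one has the sharp resolvent bound
\[
\norm{(\widetilde{U}_E(2\pi, k_y) - z)^{-1}} = \frac{1}{\mathrm{dist}(z, \sigma(\widetilde{U}_E(2\pi, k_y)))}.
\]
If $\lambda$ is not already an eigenvalue of $\widetilde{U}_E(2\pi, k_y)$ (in which case the claim is trivial), we apply this with $z = \lambda$ to $(\widetilde{U}_E(2\pi, k_y) - \lambda)\psi$ and obtain
\[
1 = \norm{\psi} \leq \frac{1}{\mathrm{dist}(\lambda, \sigma(\widetilde{U}_E(2\pi, k_y)))} \cdot CE^{-1},
\]
which rearranges to $\mathrm{dist}(\lambda, \sigma(\widetilde{U}_E(2\pi, k_y))) \leq CE^{-1}$, i.e.\ the existence of an eigenvalue $\widetilde{\lambda}$ of $\widetilde{U}_E(2\pi, k_y)$ with $|\lambda - \widetilde{\lambda}| \leq CE^{-1}$.

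There is essentially no obstacle here beyond the correct bookkeeping: the argument is the classical Bauer--Fike style estimate, and both ingredients are available off the shelf once Lemma \ref{adiabatic} and the normality of $\widetilde{U}_E(2\pi, k_y)$ are in hand. The only point worth noting is that the bound is one-sided (eigenvalues of $U_E$ are close to those of $\widetilde{U}_E$, not necessarily the reverse), which is precisely what the statement of the lemma asks for; this asymmetry is unavoidable without normality of $U_E(2\pi, k_y)$, but also unnecessary for the subsequent applications in the paper.
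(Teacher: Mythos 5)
Your argument is correct and is exactly the content of the result the paper cites (the Bauer--Fike estimate for normal matrices, Corollary 6.3.4 of Horn--Johnson); you have simply written out its standard proof via the resolvent bound $\norm{(\widetilde{U}_E - z)^{-1}} = \mathrm{dist}(z, \sigma(\widetilde{U}_E))^{-1}$ for normal operators. So this is the same approach as the paper, just with the cited lemma proved rather than quoted.
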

\begin{proof}
  This is Corollary 6.3.4 from \cite{HornJohnson}.
  The fact $\widetilde{U}_E(2\pi, k_y)$ is normal is important.
\end{proof}

We now establish an asymptotic formula for the spectrum.
\begin{lemma}\label{spectrumForm}
  Let $E > 0$  be such that $U_E(2\pi, k_y)$ has an eigenvalue of one,
  that is $E \in \sigma(\widetilde{H}(k_y))$.
  Then we may find $m,j$ such that 
  \begin{gather}
    |E - E(m, j, k_y)| < C |E|^{-1}
  \end{gather}
  for $C$ independent of $E, k_x, k_y$.
\end{lemma}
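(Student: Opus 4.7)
The plan is to chain together Lemma~\ref{eigenvalClose} with the explicit diagonalization of $\widetilde U_E(2\pi,k_y)$ recorded in \eqref{adiabaticEvo}, and then invert a linear equation in $E$. First I would note that $E\in\sigma(\widetilde H(k_y))$ is equivalent to $1$ being an eigenvalue of $U_E(2\pi,k_y)$, so Lemma~\ref{eigenvalClose} produces an eigenvalue $\widetilde\lambda$ of $\widetilde U_E(2\pi,k_y)$ with $|\widetilde\lambda-1|\le CE^{-1}$. From \eqref{adiabaticEvo} this eigenvalue has the form $\widetilde\lambda=e^{i\alpha_j(E,k_y)}$ for some band index $j\in\{1,\dots,n\}$, where
$$
\alpha_j(E,k_y)=\theta_j(k_y)+\int_0^{2\pi} E_j(k_x,k_y)^{-1}\braket{\varphi_j(k_x,k_y)|M(k_x,k_y)|\varphi_j(k_x,k_y)}\,dk_x-E\int_0^{2\pi}E_j(k_x,k_y)^{-1}\,dk_x.
$$

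Next I would lift circle proximity to phase proximity. For $E$ large enough that $CE^{-1}<1$, the inequality $|e^{i\alpha_j}-1|\le CE^{-1}$ forces $|\alpha_j(E,k_y)-2\pi m|\le C'E^{-1}$ for the integer $m$ nearest to $\alpha_j/(2\pi)$, via the elementary bound $|\alpha-2\pi m|\le (\pi/2)|e^{i\alpha}-1|$. The finite range of small $E$ where this lift is not immediate is absorbed by enlarging $C$, since for bounded $E$ the conclusion $|E-E(m,j,k_y)|<C|E|^{-1}$ only demands a bounded error for some choice of $m$.

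Solving $\alpha_j(E,k_y)-2\pi m=0$ for $E$ returns the formula for $E(-m,j,k_y)$, so after relabeling $m\mapsto -m$ I would divide the phase estimate through by $I_j(k_y):=\int_0^{2\pi}E_j(k_x,k_y)^{-1}\,dk_x$ to transport the $O(E^{-1})$ error from $\alpha_j$ to $E$. By Assumption~\ref{assumption:bulk} each $E_j$ has constant sign and is bounded away from zero on the Brillouin zone, so $|I_j(k_y)|$ is uniformly bounded below by a positive constant. This yields $|E-E(m,j,k_y)|\le C''E^{-1}$ with $C''$ independent of $E$ and $k_y$, which is the claim.

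The only real subtlety is the phase lift: $e^{i\alpha}$ only determines $\alpha$ modulo $2\pi$, so one must select the correct integer multiple and argue that the residual small-$E$ regime does not spoil uniformity. This is a minor point rather than a genuine obstacle, as uniformity of $I_j(k_y)$ from the spectral gap controls all subsequent constants. The substantive work has already been done in Lemma~\ref{adiabatic}, and the present lemma is essentially the arithmetic of inverting the asymptotic phase.
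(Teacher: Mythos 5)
Your proposal is correct and follows essentially the same route as the paper: apply Lemma~\ref{eigenvalClose}, use the explicit form of the eigenvalues of $\widetilde U_E(2\pi,k_y)$ (which are $e^{-ic_jE+d_j}$ with $c_j=\int E_j^{-1}\,dk_x$), and invert the linear-in-$E$ phase using the uniform lower bound on $|\int E_j^{-1}\,dk_x|$ coming from Assumption~\ref{assumption:bulk}. Your explicit phase-lift inequality $|\alpha-2\pi m|\le(\pi/2)|e^{i\alpha}-1|$ is just a more careful rendering of the paper's appeal to the derivative of $f_j$, so no substantive difference.
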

\begin{proof}
  We use $C$ as a generic constant here.
  For fixed $k_y$, the eigenvalues of $\widetilde{U}_E(2\pi, k_y)$ may be written as 
  \begin{gather*}
    f_j(E) = e^{-i c_j E + d_j}
  \end{gather*}
  where $c_j = \int E_j^{-1} dk_x$.
  These $f_j(E)$ are holomorphic in $E$,
  and $f_j(E) = 1$ if and only if  $E = E(m, j, k_y)$.
  For our given $E$, we know by Lemma \ref{eigenvalClose} that 
  \begin{gather*}
    |1 - f_j(E)| < C E^{-1} = \epsilon.
  \end{gather*}
  Since $f_j$ is holomorphic and periodic
  and takes on values $1$ only at $E(m, j, k_y)$ we must have 
  \begin{gather*}
    |E - E(m,j,k_y)| < C \epsilon
  \end{gather*}
  for some $m,j$ and some $C$ depending on the derivative of $f$.
  We have uniform bounds on $f_j' = |c_j'| |e^{-i c_j E + d_j}|$
  as $|c_j'| = |\int E_j^{-1} dk_x|$ is bounded and bounded away from zero by Assumption \ref{assumption:bulk},
  so we may find some $C$ independent of $k_x, k_y$ so that this holds.
\end{proof}

Now we find an asymptotic expression for the eigenfunctions. 
There are complications here if $U_{E}(2 \pi, k_y)$ is degenerate,
so for now we assume this is not the case. 
\begin{lemma}\label{functionsClose}
  Let $k_y$ be such that the 
  $E(m,j, k_y)$ are distinct for different $m, j$.
  Let $E \in \sigma(\widetilde{H}(k_y))$ and write $\psi(k_x, k_y, E)$ as in \eqref{solutionEq},
  and set 
  \begin{gather}
    \widetilde{\psi}(k_x, k_y, E(m,j, k_y))
    \coloneq \widetilde{U}_{E(m, j, k_y)}(k_x, k_y) \varphi_j(0, k_y).
  \end{gather}
  Then for $E$ large $U_E(2 \pi, k_y)$ is nondegenerate, and 
  if $|E - E(m, j, k_y)| < C E^{-1}$ then 
  \begin{gather}
    \norm{\psi(k_x, k_y, E) - \widetilde{\psi}(k_x, k_y, E(m,j, k_y))}_{L^{\infty}} 
    \leq C(k_y) E^{-1}
  \end{gather}
  for $C(k_y)$ depending on $k_y$.
\end{lemma}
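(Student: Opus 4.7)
The plan is to decompose the $L^\infty$ error by inserting $\widetilde{U}_E(k_x,k_y)\varphi_j(0,k_y)$ and $\widetilde{U}_E(k_x,k_y)\psi$:
\begin{align*}
\psi(k_x, k_y, E) - \widetilde{\psi}(k_x, k_y, E(m,j,k_y))
&= [U_E(k_x, k_y) - \widetilde{U}_E(k_x, k_y)]\,\psi \\
&\quad + \widetilde{U}_E(k_x, k_y)\,[\psi - \varphi_j(0,k_y)] \\
&\quad + [\widetilde{U}_E(k_x, k_y) - \widetilde{U}_{E(m,j,k_y)}(k_x, k_y)]\,\varphi_j(0,k_y),
\end{align*}
and bound each summand by $O(E^{-1})$ uniformly in $k_x\in[0,2\pi]$. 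Lemma~\ref{adiabatic} handles the first piece immediately, since $\norm{\psi}=1$. For the third piece, the explicit formula~\eqref{adiabaticEvo} exhibits $\widetilde{U}_E(k_x, k_y)$ as a finite sum of phase exponentials whose $E$-derivatives involve only the uniformly bounded quantities $\int_0^{k_x} E_\ell(t,k_y)^{-1}\,dt$; the resulting Lipschitz bound in $E$, combined with $|E-E(m,j,k_y)|\le CE^{-1}$, gives the desired estimate.

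The crux is the second piece, for which I need $\norm{\psi-\varphi_j(0,k_y)}\le C(k_y)E^{-1}$ and the nondegeneracy of $U_E(2\pi,k_y)$ near the eigenvalue $1$. I would proceed via the Riesz spectral projection. In the orthonormal basis $\{\varphi_\ell(0,k_y)\}_{\ell=1}^n$ the normal operator $\widetilde{U}_{E(m,j,k_y)}(2\pi,k_y)$ is diagonal with eigenvalues $f_\ell\bigl(E(m,j,k_y)\bigr)=e^{-ic_\ell E(m,j,k_y)+d_\ell}$, and the distinctness hypothesis on $\{E(m',\ell,k_y)\}_{m',\ell}$ forces $f_\ell\bigl(E(m,j,k_y)\bigr)\ne 1$ for $\ell\ne j$, producing some positive gap $\delta(k_y)>0$. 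Lemma~\ref{adiabatic} together with the Lipschitz bound above yields $\norm{U_E(2\pi,k_y)-\widetilde{U}_{E(m,j,k_y)}(2\pi,k_y)}=O(E^{-1})$, so for $E$ large the projection
\[
P = \frac{1}{2\pi i}\oint_{|z-1|=\delta(k_y)/2}\bigl(z-U_E(2\pi,k_y)\bigr)^{-1}\,dz
\]
is well defined and of rank one; comparing $P$ with the analogous projection for $\widetilde{U}_{E(m,j,k_y)}(2\pi,k_y)$ produces both the nondegeneracy claim and the sought eigenvector bound $\norm{\psi-\varphi_j(0,k_y)}\le C(k_y)E^{-1}$. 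Multiplying by the uniformly bounded $\widetilde{U}_E(k_x,k_y)$ then controls the second piece in $L^\infty$.

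The principal technical obstacle is ensuring that $\delta(k_y)$ can be taken uniform in $m$. Each $f_\ell$ is periodic in $E$ with period $2\pi/c_\ell$, and when these periods are incommensurable the values $f_\ell\bigl(E(m,j,k_y)\bigr)$ for $\ell\ne j$ can approach $1$ arbitrarily closely as $m$ varies; the pointwise distinctness hypothesis gives a positive gap at each individual $m$ but not automatically a uniform one. Getting around this will either require a compactness argument over a suitable common period, or absorbing the $m$-dependence into the $k_y$-dependent constant by a slightly strengthened form of the hypothesis; this step is the only one where the constant must pick up $k_y$-dependence, and all other estimates in the decomposition are uniform.
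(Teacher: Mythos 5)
Your decomposition and the paper's are the same argument: the paper writes $U_E\psi(0)-\widetilde{U}_{E(m,j,k_y)}\varphi_j(0)$ as $U_E[\psi(0)-\varphi_j(0)]$ plus $[U_E-\widetilde{U}_E]\varphi_j(0)$ plus $[\widetilde{U}_E-\widetilde{U}_{E(m,j,k_y)}]\varphi_j(0)$, controls the first term by eigenvector perturbation theory (citing Kato, Ch.~2, rather than writing out the Riesz projection as you do), the second by Lemma~\ref{adiabatic}, and the third by the Lipschitz dependence of \eqref{adiabaticEvo} on $E$. Your grouping of the triangle inequality is marginally different and your Riesz-projection formulation of the perturbation step is more explicit, but nothing of substance differs.

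The obstacle you flag at the end is real, and the paper does not resolve it either: its proof states only that the constant ``depends inversely on the size of the eigenvalue gap of $U_E(2\pi,k_y)$'' and absorbs this into $C(k_y)$, without verifying that the gap of $\widetilde{U}_{E(m,j,k_y)}(2\pi,k_y)$ at the eigenvalue $1$ is bounded below uniformly in $m$. As you observe, the eigenvalues $f_\ell(E)=e^{-ic_\ell E+d_\ell}$ have periods $2\pi/|c_\ell|$ that are in general incommensurable, so $f_\ell(E(m,j,k_y))$ for $\ell\ne j$ can approach $1$ along a subsequence of $m$; the pointwise distinctness hypothesis gives a nonzero gap for each $m$ but not a uniform one, so the constant in the lemma as stated is not obviously uniform in $E$. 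This is a gap in the paper's argument that you have correctly diagnosed rather than a defect of your proposal; repairing it requires exactly the kind of strengthened separation hypothesis or quantitative equidistribution input you describe (note that the paper's later excision argument, which asserts $|E(m,j,k_y)-E(k,\ell,k_y)|>C$ uniformly over all $m,k,j,\ell$ after removing neighborhoods of finitely many crossings in $k_y$, silently assumes the same uniformity).
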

\begin{proof}
  We first need to establish $U_E(2 \pi, k_y)$ is nondegenerate for $E$ large enough.
  Using Lemma \ref{eigenvalClose} this follows since $\widetilde{U}_E(2 \pi, k_y)$ is nondegenerate at the 
  $E(m, j, k_y)$ by assumption.
  Then 
  \begin{align}
    \norm{\psi(k_x, k_y, E) - \widetilde{\psi}(k_x, k_y, E(m,j,k_y))}
    &= \norm{U_E(k_x, k_y) \psi(0) - U_{E(m,j,k_y)}(k_x, k_y) \varphi_j(0)}.
  \end{align}
  First, see that $\varphi_j$ is an eigenvector of $\widetilde{U}_{E}(2\pi, k_y)$ for all $E$ and $k_y$.
  By Lemma \ref{adiabatic} $\widetilde{U}_E(2\pi, k_y)$ is distance $C E^{-1}$ to 
  $U_E(2\pi, k_y)$.
  Then by standard matrix perturbation theory (\cite{kato2013perturbation} Section 2),
  since $\widetilde{U}_E(2\pi, k_y)$ and $U_E(2\pi, k_y)$ are nondegenerate,
  we may choose an eigenvector $\psi$ of $U_E(2\pi, k_y)$ such that 
  \begin{gather}
    \norm{\psi - \varphi_j(0)} < C(k_y) E^{-1}
  \end{gather}
  where $C$ depends inversely on the size of the eigenvalue gap of $U_E(2\pi, k_y)$.
  Next, see 
  \begin{gather}
    \norm{U_E(k_x, k_y) \psi(0) - U_{E(m,j,k_y)}(k_x, k_y) \varphi_j(0)} \notag \\
    = \norm{U_E(k_x, k_y) \psi(0) - U_E(k_x, k_y) \varphi_j(0) + U_E(k_x, k_y) \varphi_j(0) - \widetilde{U}_{E(m,j,k_y)(k_x, k_y)} \varphi_j(0)}
  \end{gather}
  and 
  \begin{align}
    \norm{U_E(k_x, k_y) \psi(0) - U_E(k_x, k_y) \varphi_j(0)}
    \leq C \norm{\psi(0) - \varphi_j(0)}
    \leq C(k_y) E^{-1}
  \end{align}
  and
  \begin{align}\label{triangle}
    \norm{U_E(k_x, k_y) \varphi_j(0) - \widetilde{U}_{E(m, j, k_y)} \varphi_j(0)}
    &= \norm{U_E(k_x, k_y) - \widetilde{U}_E(k_x, k_y) + \widetilde{U}_E(k_x, k_y) - \widetilde{U}_{E(m, j, k_y)}} \notag \\
    &\leq C E^{-1}
  \end{align}
  where in \eqref{triangle} we have used Lemma \ref{adiabatic} and that the eigenvalues of $\widetilde{U}_E$ are Lipshitz in $E$.
  Hence we get the result.
\end{proof}

We now establish which eigenvalues belong to $\widetilde{H}_{\text{edge}}$.
By Lemma \ref{decomposition}, to determine if an eigenvalue belongs to the edge Hamiltonian
we need only check its Fourier transform has a nonvanishing positive Fourier mode,
if the energy is non-degenerate this is equivalent to its Fourier transform having no negative Fourier modes. 
We are aiming for

\begin{lemma}\label{seperatingBands}
  Let $k_y$ be such that the hypothesis of Lemma \ref{functionsClose} holds. Let large enough  
  $E$ be close to $E(m, j, k_y)$, in the sense of Lemma~\ref{spectrumForm}, for $E_j < 0$.
  Then $E \in \sigma(\widetilde{H}_{\text{edge}}(k_y))$.
\end{lemma}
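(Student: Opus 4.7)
The plan is to use the approximate eigenfunction from Lemma~\ref{functionsClose} together with a standard non-stationary phase argument to show that, for $E>0$ large and $E_j<0$, the inverse $k_x$-Fourier transform of the eigenfunction is concentrated on positive lattice sites. Combined with the block-decoupling between $\{x\ge 1\}$ and $\{x\le 0\}$ underlying Lemma~\ref{decomposition}, this places $E$ in $\sigma(\widetilde H_{\text{edge}}(k_y))$.

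First, I would invoke Lemma~\ref{functionsClose} to replace $\psi(\,\cdot\,,k_y,E)$ by $\widetilde\psi(\,\cdot\,,k_y,E(m,j,k_y))$ up to an $O(E^{-1})$ error in $L^\infty(S^1)$, hence in $L^2(S^1)$. Since the inverse Fourier transform is a unitary $L^2(S^1)\to\ell^2(\mathbb Z)$, the same error controls the spatial representation, so it suffices to prove a positive, $E$-independent lower bound on $\sum_{x\ge 1}|\hat{\widetilde\psi}(x)|^2$. Using \eqref{adiabaticEvo}, this reduces to the oscillatory integral
\[
\hat{\widetilde\psi}(x,k_y,E)=\frac{1}{2\pi}\int_0^{2\pi} e^{i\Phi_x(k_x)}\,a(k_x,k_y)\,dk_x,\qquad \Phi_x(k_x)=-xk_x+\Theta_j(k_x,k_y)-E\!\int_0^{k_x}\!E_j(t,k_y)^{-1}\,dt,
\]
where $\Theta_j$ is a real smooth function (the Berry phase plus the real part of the $\langle\varphi_j|M|\varphi_j\rangle$ correction) and the amplitude $a$ absorbs $\varphi_j$ together with the $E$-independent real-modulation factor coming from the non-self-adjoint part of $\widetilde U_E$; both are bounded with bounded derivatives uniformly in $k_y$.

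The key estimate is non-stationary phase on the half-line $x\le 0$. Because $E_j<0$ throughout the band, $-E/E_j(k_x,k_y)\ge E/\max|E_j|>0$, so for $x\le 0$ and $E$ large enough to dominate $\|\Theta_j'\|_\infty$ the derivative satisfies $|\Phi_x'(k_x)|\ge c(E+|x|)$ uniformly in $k_x,k_y$. Iterating integration by parts $N$ times then yields $|\hat{\widetilde\psi}(x)|\le C_N(E+|x|)^{-N}$ for $x\le 0$, so $\sum_{x\le 0}|\hat{\widetilde\psi}(x)|^2\lesssim E^{1-2N}$, which is $o(1)$ for $N\ge 1$. Conversely, the real-modulation factor in $\widetilde\psi$ is $E$-independent and uniformly bounded, so $|\widetilde\psi(k_x)|$ is bounded pointwise below by a positive constant, giving $\|\widetilde\psi\|_{L^2(S^1)}^2\ge c_0>0$. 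Parseval then forces $\sum_{x\ge 1}|\hat{\widetilde\psi}(x)|^2\ge c_0-o(1)>0$ for $E$ large, and propagating the $O(E^{-1})$ approximation error back gives the same conclusion for $\psi$. Since $\widetilde H(k_y)$ decouples the half-lines $\{x\ge 1\}$ and $\{x\le 0\}$ (the hopping between $x=0$ and $x=1$ sits in $H_0$, which carries the prefactor $0$ in $\widetilde H$), the restriction $\chi_{x\ge 1}\hat\psi$ is a nonzero eigenvector of $\widetilde H_{\text{edge}}(k_y)$ at eigenvalue $E$.

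The main obstacle I expect is the uniformity book-keeping: the non-stationary phase constants, the pointwise lower bound on $|\widetilde\psi|$, and the smoothness estimates on $a$ and $\Theta_j$ all need to be controlled uniformly over the relevant $k_y$. Assumption~\ref{assumption:bulk} provides the uniform lower bound $-E/E_j\ge E/\max|E_j|$, and smoothness of $\varphi_j$ on the compact torus handles the rest, so no genuinely new analytic ingredient beyond Lemmas~\ref{adiabatic}--\ref{functionsClose} should be needed.
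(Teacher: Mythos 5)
Your proposal is correct and follows essentially the same route as the paper: the non-stationary phase/integration-by-parts bound on the negative Fourier modes (exploiting that $-E/E_j>0$ and $-x\ge 0$ have the same sign so the phase derivative never vanishes) is the paper's Lemma~\ref{O(k)}, the $E$-independent $L^2$ lower bound is its Lemma~\ref{L2}, and the conclusion via the block decomposition is its Lemma~\ref{decomposition}. The only cosmetic difference is that you argue directly that the mass on $x\ge 1$ stays bounded below, whereas the paper phrases it as a contradiction.
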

We accomplish this in a few steps. In the following lemmas, $k_y$ is fixed, 
satisfying the conditions of Lemma \ref{functionsClose}.
\begin{lemma}\label{O(k)}
  Let $n$ be negative. 
  Then $|\widehat{\widetilde{\psi}}_{E(m,j,k_y)}(n)| < C E(m,j,k_y)^{-k}$ for all $k \in \N$,
  for $C$ independent of $m$.
\end{lemma}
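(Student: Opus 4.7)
The plan is to express $\widehat{\widetilde{\psi}}_{E(m,j,k_y)}(n)$ as an oscillatory integral over $[0,2\pi]$ and extract the decay by repeated non-stationary-phase integration by parts, using the sign hypotheses $E_j<0$ and $n<0$ to rule out any stationary points of the phase. Applying $\widetilde{U}_{E(m,j,k_y)}(k_x)$ from \eqref{adiabaticEvo} to $\varphi_j(0,k_y)$ gives
\[
\widetilde{\psi}_{E(m,j,k_y)}(k_x) = \varphi_j(k_x,k_y)\, e^{i S(k_x)},
\]
where $S(k_x)$ is the real phase built from $\theta_j$, $E(m,j,k_y)\int_0^{k_x} E_j^{-1}\,dt$, and $\int_0^{k_x} E_j^{-1}\langle\varphi_j|M|\varphi_j\rangle\,dt$. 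Consequently
\[
\widehat{\widetilde{\psi}}(n) = \frac{1}{2\pi}\int_0^{2\pi} \varphi_j(k_x)\, e^{i\Phi_n(k_x)}\, dk_x,\qquad \Phi_n(k_x) := S(k_x) - n k_x.
\]

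The key step is a uniform lower bound on $|\Phi_n'|$. Differentiating,
\[
\Phi_n'(k_x) = -n - E(m,j,k_y)\, E_j(k_x,k_y)^{-1} + \theta_j'(k_x) + E_j(k_x,k_y)^{-1}\langle\varphi_j|M|\varphi_j\rangle(k_x).
\]
Under the hypothesis $E_j<0$, the second term equals $E(m,j,k_y)\,|E_j(k_x,k_y)|^{-1}>0$; for $n<0$, $-n>0$ as well. The remaining contributions are bounded uniformly in $k_x, k_y, m, n$ by smoothness of $\varphi_j, \theta_j, E_j, M$ and compactness of the momentum torus. Assumption~\ref{assumption:bulk} provides a positive lower bound on $|E_j|^{-1}$, so for some $c>0$,
\[
|\Phi_n'(k_x)| \geq c\bigl(E(m,j,k_y) + |n|\bigr) \geq c\, E(m,j,k_y),
\]
uniformly in $m$ and $n<0$, provided $E(m,j,k_y)$ is sufficiently large.

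Given this non-vanishing of $\Phi_n'$, I apply the standard identity $e^{i\Phi_n} = (i\Phi_n')^{-1}\partial_{k_x} e^{i\Phi_n}$ and integrate by parts $k$ times. Each iteration gains a factor $|\Phi_n'|^{-1} \lesssim E(m,j,k_y)^{-1}$, and because the higher $k_x$-derivatives of $\Phi_n$, $\varphi_j$, and the amplitude are bounded uniformly in $m$, $n$, successive derivatives of $1/\Phi_n'$ remain of the same order. The boundary terms vanish: $e^{-i n k_x}$ is $2\pi$-periodic for $n\in\mathbb{Z}$, and $e^{iS(k_x)}\varphi_j(k_x,k_y)$ is $2\pi$-periodic by the very defining eigenvalue condition $\widetilde{U}_{E(m,j,k_y)}(2\pi,k_y)\varphi_j(0,k_y) = \varphi_j(0,k_y)$ (combined with the gauge $\varphi_j(2\pi,k_y) = e^{i\theta_j(k_y)}\varphi_j(0,k_y)$, which is already absorbed into $S$). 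This yields $|\widehat{\widetilde{\psi}}(n)| \leq C_k\, E(m,j,k_y)^{-k}$ with $C_k$ independent of $m$.

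The main obstacle is precisely the sign analysis for $\Phi_n'$. Both hypotheses $E_j<0$ and $n<0$ are essential: were either reversed, the two unbounded contributions $-n$ and $-E\,E_j^{-1}$ would have opposite signs and could cancel at some $k_x\in[0,2\pi]$, producing a genuine stationary point of the phase and yielding only an $E^{-1/2}$ decay by standard stationary-phase estimates—too weak for the claim. The same sign asymmetry is what ultimately selects which branches belong to $\widetilde{H}_{\mathrm{edge}}$ in Lemma~\ref{seperatingBands}.
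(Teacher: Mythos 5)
Your proof is correct and follows essentially the same route as the paper: express $\widehat{\widetilde{\psi}}(n)$ as an oscillatory integral, observe that for $E_j<0$, $E>0$, $n<0$ the two large terms in the phase derivative ($-n$ and $-E\,E_j^{-1}$) have the same sign so $|\Phi_n'|\gtrsim E+|n|$, and then gain a factor of $E^{-1}$ per integration by parts. The only cosmetic difference is that you absorb the bounded $\theta_j$ and $M$ contributions into the phase while the paper keeps them in the amplitude $g_j$; the estimate is identical.
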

\begin{proof}
  For simplicity, write 
  \begin{gather}
    \widetilde{\psi}_{E(m,j,k_y)}(k_x, k_y) = e^{-i E c_j(k_x, k_y)} g_j(k_x, k_y)
  \end{gather}
  for $c_j,g_j$ independent of $E$ and $c_j(k_x, k_y) = \int E_j^{-1} dk_x$.
  The Fourier coefficients of $\widetilde{\psi}$ are 
  \begin{gather}
    \widehat{\widetilde{\psi}}(n)
    = \int_{0}^{2 \pi} e^{-i n k_x}  e^{-i E c(k_x, k_y)} g(k_x, k_y) dk_x.
  \end{gather}
  Note 
  \begin{gather*}
    e^{-i n k_x - i E c(k_x, k_y)}
    = \frac{i}{n + E c'(k_x, k_y)} \frac{d}{dk_x} e^{-i n k_x - i E c(k_x, k_y)}
  \end{gather*}
  and since $n$ is negative and $c' = E_j^{-1}$ is negative the denominator is nonzero, hence an integration by parts gives 
  \begin{align*}
    \widehat{\widetilde{\psi}}(n) &= \int_{0}^{2\pi} \frac{-i}{n + E c'(k_x, k_y)} e^{i n k_x -i E c(k_x, k_y)} g'(k_x, k_y) dk_x  \\
    &+ \int_{0}^{2\pi} \frac{-i E c''(k_x, k_y)}{(n + E c'(k_x, k_y))^2 }e^{i n k_x -i E c(k_x, k_y)} g(k_x, k_y) dk_x.
  \end{align*}
  Since $n$ is negative, the signs align and $|n + E c'|^{-1} < (E c')^{-1}$.
  Hence we see the bound holds for $k = 1$.
  Repeated integration by parts gives the bound in general.
\end{proof}
\begin{lemma}
  Let $n$ be negative. 
  Then $|\widehat{{\psi}}_{E}(n)| < C E^{-1}$
  for $C$ independent of $E$.
\end{lemma}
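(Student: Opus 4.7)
The plan is to deduce this statement from the two preceding results via a triangle inequality applied to the Fourier coefficients. By Lemma~\ref{spectrumForm}, any sufficiently large $E \in \sigma(\widetilde{H}(k_y))$ lies within $CE^{-1}$ of some $E(m,j,k_y)$, so Lemma~\ref{functionsClose} applies and provides an approximate eigenfunction $\widetilde{\psi}(k_x, k_y, E(m,j,k_y))$ with
\[
\|\psi(\cdot, k_y, E) - \widetilde{\psi}(\cdot, k_y, E(m,j,k_y))\|_{L^\infty} \leq C(k_y) E^{-1}.
\]
The previous Lemma~\ref{O(k)} already gives rapid decay of the negative Fourier modes of $\widetilde{\psi}$, so morally all that is left is to transfer this decay from $\widetilde{\psi}$ to $\psi$.

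First I would bound the difference of the Fourier coefficients using the pointwise estimate above. With the usual normalization, $|\widehat{f}(n)| \leq \frac{1}{2\pi}\int_0^{2\pi} |f(k_x)|\,dk_x \leq \|f\|_{L^\infty}$, hence
\[
\bigl|\widehat{\psi}_E(n) - \widehat{\widetilde{\psi}}_{E(m,j,k_y)}(n)\bigr| \leq \|\psi - \widetilde{\psi}\|_{L^\infty} \leq C(k_y) E^{-1}.
\]
Second, I would invoke Lemma~\ref{O(k)} with $k = 1$: for $n < 0$,
\[
\bigl|\widehat{\widetilde{\psi}}_{E(m,j,k_y)}(n)\bigr| \leq C\, E(m,j,k_y)^{-1}.
\]
Since $|E - E(m,j,k_y)| < C E^{-1}$, for $E$ large one has $E(m,j,k_y)^{-1} \leq 2 E^{-1}$. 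Combining the two bounds via the triangle inequality gives $|\widehat{\psi}_E(n)| \leq C E^{-1}$ for $n < 0$, as claimed.

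The main subtlety I anticipate is aligning the normalizations of $\psi$ and $\widetilde{\psi}$: eigenfunctions are only defined up to a scalar, and Lemma~\ref{functionsClose} implicitly fixes this by taking $\psi(0, k_y, E)$ to be the perturbation of $\varphi_j(0, k_y)$ selected in its proof. One must verify that this is the normalization underlying the statement we wish to prove; otherwise an overall $E$-independent rescaling could spoil the claimed bound. A secondary point is that the nondegeneracy hypothesis of Lemma~\ref{functionsClose} is needed to keep the perturbation-theoretic constant finite, which is why the constant is allowed to depend on $k_y$ through the spectral gap of $U_E(2\pi, k_y)$; this dependence drops out once we restrict to a fixed $k_y$ in the hypothesis. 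Beyond these bookkeeping issues the argument is a direct two-step triangle inequality rather than a genuinely new analytic estimate.
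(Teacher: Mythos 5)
Your proposal is correct and is exactly the argument the paper intends: its proof consists only of the citation ``Follows from Lemmas \ref{functionsClose} and \ref{O(k)},'' and your two-step triangle inequality (transferring the $L^\infty$ closeness of $\psi$ and $\widetilde{\psi}$ to their Fourier coefficients, then invoking the decay of the negative modes of $\widetilde{\psi}$) is the standard way to fill that in. Your remarks on normalization and on comparing $E^{-1}$ with $E(m,j,k_y)^{-1}$ are sensible bookkeeping that the paper leaves implicit.
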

\begin{proof}
  Follows from Lemmas \ref{functionsClose} and \ref{O(k)}.
\end{proof}
\begin{lemma}\label{L2}
  The $L^2$ norms $\norm{\psi(k_x, k_y, E)}_{L^2}$ are bounded below 
  away from zero in $E$.
\end{lemma}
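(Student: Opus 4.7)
The plan is to compare $\psi(k_x, k_y, E)$ with the adiabatic approximation $\widetilde{\psi}(k_x, k_y, E(m,j,k_y))$ from Lemma~\ref{functionsClose}: that lemma gives an $L^\infty$ estimate of order $E^{-1}$ on the difference, so by the triangle inequality it suffices to prove that $\|\widetilde{\psi}\|_{L^2}$ is bounded below by a positive constant independent of $E$. Then for $E$ large enough the $O(E^{-1})$ error cannot pull $\|\psi\|_{L^2}$ below, say, half of that constant.

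The key step is to compute $\|\widetilde{\psi}(k_x, k_y)\|_{\mathbb{C}^n}$ pointwise. Using $\widetilde{U}_E = W e^{-iH_W}$ from Lemma~\ref{adiabatic}, the fact that $W(k_x, k_y)$ is Kato parallel transport (hence unitary), and the fact that $H_W = \sum_\ell P_\ell(0, k_y)\, \lambda_\ell$ acts as multiplication by the complex scalar $\lambda_j(k_x, k_y)$ on the vector $\varphi_j(0, k_y)$, I get
\[
\|\widetilde{\psi}(k_x, k_y)\|_{\mathbb{C}^n} = \bigl|e^{-i \lambda_j(k_x, k_y)}\bigr| = \exp\bigl(\operatorname{Im} \lambda_j(k_x, k_y)\bigr).
\]
Here $\lambda_j(k_x, k_y) = E \int_0^{k_x} E_j^{-1}(t, k_y)\, dt - \int_0^{k_x} E_j^{-1}(t,k_y) \braket{\varphi_j | M | \varphi_j}(t,k_y)\, dt$: the $E$-dependent piece is real (since $E_j^{-1}$ is real), contributing only a unimodular phase, and only the $E$-independent second term can contribute to the imaginary part. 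Therefore $\|\widetilde{\psi}(k_x, k_y)\|_{\mathbb{C}^n}$ is a continuous positive function of $k_x$ that does not depend on $E$, and integrating its square over $[0, 2\pi]$ gives a positive constant $c(k_y)$.

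I expect the main subtlety to be recognizing that, although $\widetilde{U}_E$ is \emph{not} unitary (because $H^{-1} M$ is not Hermitian) and so a priori $\|\widetilde{\psi}\|_{L^2}$ could grow or decay exponentially in $E$, the reality of the spectral data $E_j$ of $H$ forces the $E$-dependent part of $\lambda_j$ to be real and thereby prevents this. Once this structural point is isolated, combining the pointwise bound on $\|\widetilde{\psi}(k_x, k_y)\|_{\mathbb{C}^n}$ with the $O(E^{-1})$ perturbation bound from Lemma~\ref{functionsClose} yields $\|\psi(\cdot, k_y, E)\|_{L^2} \geq c(k_y)/2$ for all sufficiently large $E$, as desired.
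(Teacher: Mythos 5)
Your proposal is correct and takes essentially the same route as the paper: the paper's proof likewise observes that $\norm{\widetilde{\psi}_{E(m,j,k_y)}}_{L^2}$ is independent of $E$ because the $E$-dependent factor is a pure phase (as $E_j^{-1}$ is real), and then invokes the $O(E^{-1})$ closeness from Lemma~\ref{functionsClose}. Your write-up simply spells out the unitarity of $W$ and the reality of the $E$-dependent exponent in more detail than the paper does.
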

\begin{proof}
  First see that $\norm{\widetilde{\psi}_{E(m,j, k_y)} (k_x, k_y)}_{L^2}$ 
  is independent of $E$, since the term with $E$ is a pure phase.
  Hence the result follows by Lemma \ref{functionsClose}.
\end{proof}

We can now prove Lemma \ref{seperatingBands}.
\begin{proof}
  Suppose, for contradiction, that $\psi(k_x,  k_y, E)$ only had negative Fourier coefficients.
  By Lemma \ref{O(k)}, the Fourier coefficients of $\psi(k_x, k_y, E)$ are $O(E^{-1})$.
  Hence we can make the $L^2$ norm of $\psi$ as small as we want.
  This contradicts Lemma \ref{L2}.
\end{proof}
\begin{lemma}
  There are finitely many $k_y$ where the $E(m,j, k_y)$ intersect.
\end{lemma}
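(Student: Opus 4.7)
The plan is to exploit real-analyticity in $k_y$ on the compact interval $[0,2\pi]$. Writing
\[
a_j(k_y) := \int_0^{2\pi} E_j(k_x,k_y)^{-1}\,dk_x, \qquad
b_j(k_y) := \theta_j(k_y) + \int_0^{2\pi} E_j(k_x,k_y)^{-1}\braket{\varphi_j|M|\varphi_j}\,dk_x,
\]
so that $E(m,j,k_y) = (2\pi m + b_j(k_y))/a_j(k_y)$, both $a_j$ and $b_j$ are real-analytic in $k_y$, and $a_j$ is bounded above and away from zero by Assumption~\ref{assumption:bulk}.

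First I would dispose of intra-band crossings: if $j=j'$ then $E(m,j,k_y) - E(m',j,k_y) = 2\pi(m-m')/a_j(k_y)$, which vanishes only when $m=m'$, so intersections can only occur between distinct bands. For $j\neq j'$ and a fixed pair $(m,m')$, the intersection set is the zero set on $[0,2\pi]$ of the real-analytic function
\[
F^{j,j'}_{m,m'}(k_y) := a_{j'}(k_y)\bigl(2\pi m + b_j(k_y)\bigr) - a_j(k_y)\bigl(2\pi m' + b_{j'}(k_y)\bigr).
\]
Provided $F^{j,j'}_{m,m'}\not\equiv 0$, compactness gives finitely many zeros. The identically-zero case imposes a specific analytic identity among $(a_j,b_j,a_{j'},b_{j'})$ that is non-generic and can be ruled out either by a small perturbation of $H$ or verified directly in the models of interest.

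The main obstacle is that, a priori, there are countably many pairs $(m,m')$, so the naive union of finite zero sets is only countable. I would resolve this by observing that the lemma is used in the sequel only to exclude degeneracies among the branches relevant for the spectral flow at the fiducial line $E=0$: since $a_j$ is uniformly bounded above and below, for any fixed energy window $|E|\le R$ the condition $|E(m,j,k_y)|\le R$ confines $|m|$ to a finite range uniformly in $k_y$. Only finitely many branches $E(m,j,\cdot)$ can traverse this window, so only finitely many pairs $(m,m',j,j')$ need to be considered, and each contributes finitely many zeros by the previous step. This yields the finite total claimed.
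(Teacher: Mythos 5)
Your reduction to zero sets of real-analytic functions on the compact interval $[0,2\pi]$, and your observation that intra-band crossings $E(m,j,k_y)=E(m',j,k_y)$ are impossible for $m\neq m'$ because the difference is $2\pi(m-m')/a_j(k_y)$ with $a_j$ bounded away from zero, match the skeleton of the paper's proof. Your point about the countably many pairs $(m,m')$ is moreover a genuine one that the paper's proof passes over in silence: taken over all branches the union of crossing points need not be finite (two arithmetic progressions with incommensurate spacings $2\pi/a_j$ and $2\pi/a_{j'}$ produce crossings for arbitrarily many $(m,m')$), and restricting to a bounded energy window $|E|\le R$, where boundedness of $a_j$ above and below confines $m$ to a finite set, is the right way to make the statement both true and sufficient for the excision step in the proof of Theorem~\ref{WindingNumber}.

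The genuine gap is your treatment of the case $F^{j,j'}_{m,m'}\equiv 0$. The lemma is asserted for every Hamiltonian satisfying Assumption~\ref{assumption:bulk}, so ``non-generic, remove by a small perturbation of $H$'' does not prove it --- perturbing $H$ changes the operator under study --- and ``verify in the models of interest'' abandons the general claim. Once analyticity and compactness are invoked, ruling out identical overlap \emph{is} the content of the lemma, and the paper does supply an argument: if $E(m,j,\cdot)\equiv E(m',j',\cdot)$, rearrange, evaluate at $k_y=0$ and $k_y=2\pi$, and subtract. Every term except the Berry phases is periodic in $k_y$, so one is left with $\bigl(a_j(0)/a_{j'}(0)\bigr)\bigl(\theta_{j'}(2\pi)-\theta_{j'}(0)\bigr)=\theta_j(2\pi)-\theta_j(0)$, i.e.\ the constraint $a_j c_{j'}=a_{j'}c_j$ relating the Chern numbers $c_j,c_{j'}$ weighted by $a_j=\int E_j^{-1}dk_x$; the paper then argues from the signs and magnitudes of $a_j,a_{j'}$ that the two windings would have to differ, a contradiction. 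You should close your argument with a step of this kind. (Be aware that even this constraint is vacuous when $c_j=c_{j'}=0$, so the overlap case arguably needs more than either your proposal or the paper's two-point evaluation provides --- e.g.\ differentiating the overlap identity in $k_y$ rather than only comparing endpoints.)
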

\begin{proof}
  Since $H(k_x, k_y)$ is analytic, the $E(m,j,k_y)$ are analytic in $k_y$ and can only intersect at finitely many points,
  or entirely overlap.
  Suppose then that $E(m,j,k_y) = E(k, \ell, k_y)$. Then rearranging,
  \begin{gather*}
    2 \pi m = \left(\frac{\int E_j^{-1} dk_x}{\int E_k^{-1} dk_x} \right) \left(2 \pi n + \theta_k(k_y) + \int E_k^{-1} \braket{\psi_k|A|\psi_k} dk_x \right) 
    - \theta_j(k_y) - \int E_j^{-1} \braket{\psi_j|A|\psi_j} dk_x.
  \end{gather*}
  Evaluate at $k_y = 2 \pi$ and $k_y = 0$ and subtract. By periodicity we are left with
  \begin{gather*}
    \left( \frac{\int E_j^{-1} dk_x}{\int E_k^{-1} dk_x}  \right) (\theta_k(2\pi) - \theta_k(0)) 
    = (\theta_j ( 2\pi) - \theta_j(0)).
  \end{gather*}
  If $E_j$ and $E_k$ have different signs, then $\theta_j$ and $\theta_k$ have winding numbers of opposite sign.
  Otherwise without loss of generality assume $|E_j| < |E_k|$.
  Then $\theta_j(2\pi) - \theta_j(0) > \theta_k(2\pi) - \theta_k(0)$
  and $\theta_j$ and $\theta_k$ have different winding numbers.
  Hence $E(m,j,k_y)$ and $E(k, \ell, k_y)$ cannot be the same.
\end{proof}

With this, we can now prove Theorems \ref{WindingNumber} and \ref{Formula}.

\begin{proof}
  Excise a neighborhood around the points where 
  the $E(m,j,k_y)$ intersect. Then there exists a $C > 0$ such that 
  $|E(m,j,k_y) - E(k, \ell, k_y)| > C$ for all $m,k,j,\ell$.
  As such Lemma \ref{functionsClose} holds for a $C$ independent of $k_y$ for all $k_y$ besides those which have been excised.
  Thus by Lemma \ref{seperatingBands} the asymptotic formula in Theorem \ref{Formula} holds,
  for the $k_y$ not excised. By continuity of the spectrum in $k_y$, it must hold for all $k_y$.
  This proves Theorem \ref{Formula}.
  Since all terms in the asymptotic formula are periodic in $k_y$ besides the Berry phases,
  we compute the winding number as 
  \begin{align*}
    \text{wind}(\sigma(\widetilde{H}_{\text{edge}}(k_y))) 
    &= \sum_{E_j < 0} \text{wind}(\theta_j(k_y)) \\
    &= \frac{1}{2\pi} \sum_{E_j < 0} (\theta_j(2\pi) - \theta_j(0)) \\
    &= \kappa,
  \end{align*}
where $\kappa$ is the Hall conductance, see Section~\ref{sec:setup}.  This proves Theorem \ref{WindingNumber}.
\end{proof}

\section{Numerical Experiments}\label{Numerics}
We illustrate our results by calculating spectra flows of a discretization of $\widetilde{H}$ in 
\eqref{Htilde}.
We denote the spectral Fourier discretization of $\widetilde{H}$ by $\mat{H}$.
The main numerical challenge is finding the correspondence between eigenpairs of the continuous operator 
$\widetilde{H}$ and that of the discretized operator $\mat{H}$.
This is complicated by the fact that $\mat{H}$ has spurious eigenvalues corresponding to discretization errors due to not fully resolving high frequency modes.
To overcome this difficulty, we compute the eigenpairs using parallel transport (cf. Section 7 of \cite{kato2013perturbation}) following the approach of 
\cite{gopal2025highly}.
There, they show that the parallel transport of an eigenpair can be computed by integrating 
\begin{align*}
  \mat{\psi}'(k) &= - (\mat{H}(k) - E(k) \mat{I})^{\dagger} \mat{H}'(k) \mat{\psi}(k) \\
  E'(k) &= \mat{\psi}^*(k) \mat{H}'(k) \mat{\psi}(k)
\end{align*}
where $\mat{I}$ denotes the identity matrix and $\dagger$ denotes the Moore-Penrose pseudoinverse.
We compute an initial condition using the QR algorithm and then integrate the differential equation using a fourth-order Runge-Kutta method.

The $\alpha = 1/3$ Harper-Hofstadter model has two nontrivial Chern numbers, $\pm 1$,
depicted in Figure \ref{Harper}.
As predicted by Theorem \ref{Formula}, with Chern number $1$ we have one family of eigenvalues in the positive spectrum of $\widetilde{H}_{\text{edge}}(k_y)$
and winding number $1$, and two families of eigenvalues in the positive spectrum of $\widetilde{H}_{\text{edge}}(k_y)^{-}$ and winding number $-1$,
seen in Figure \ref{Positive Chern}.
\begin{figure}[h!]
    \centering
  \includegraphics[totalheight=4cm]{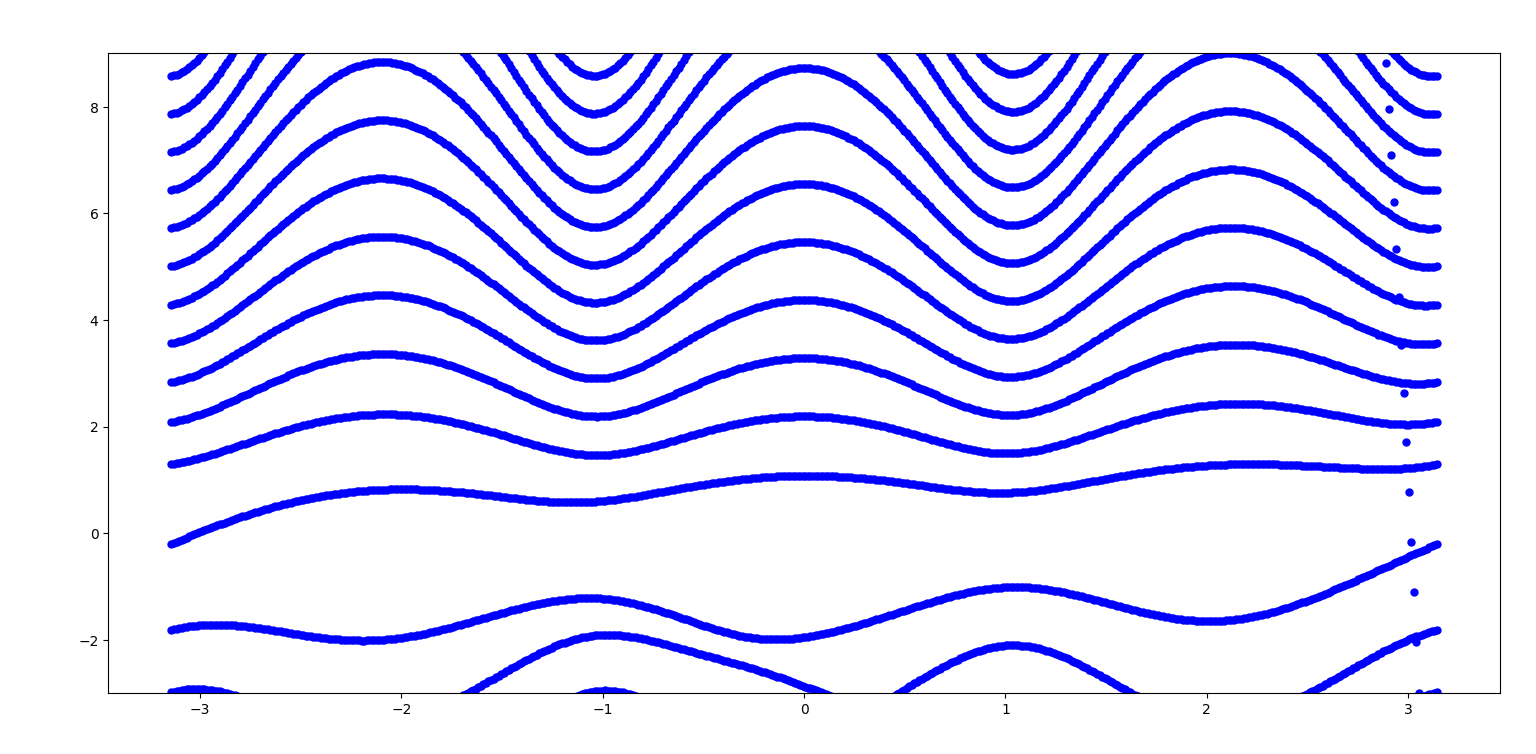}
  \includegraphics[totalheight=4cm]{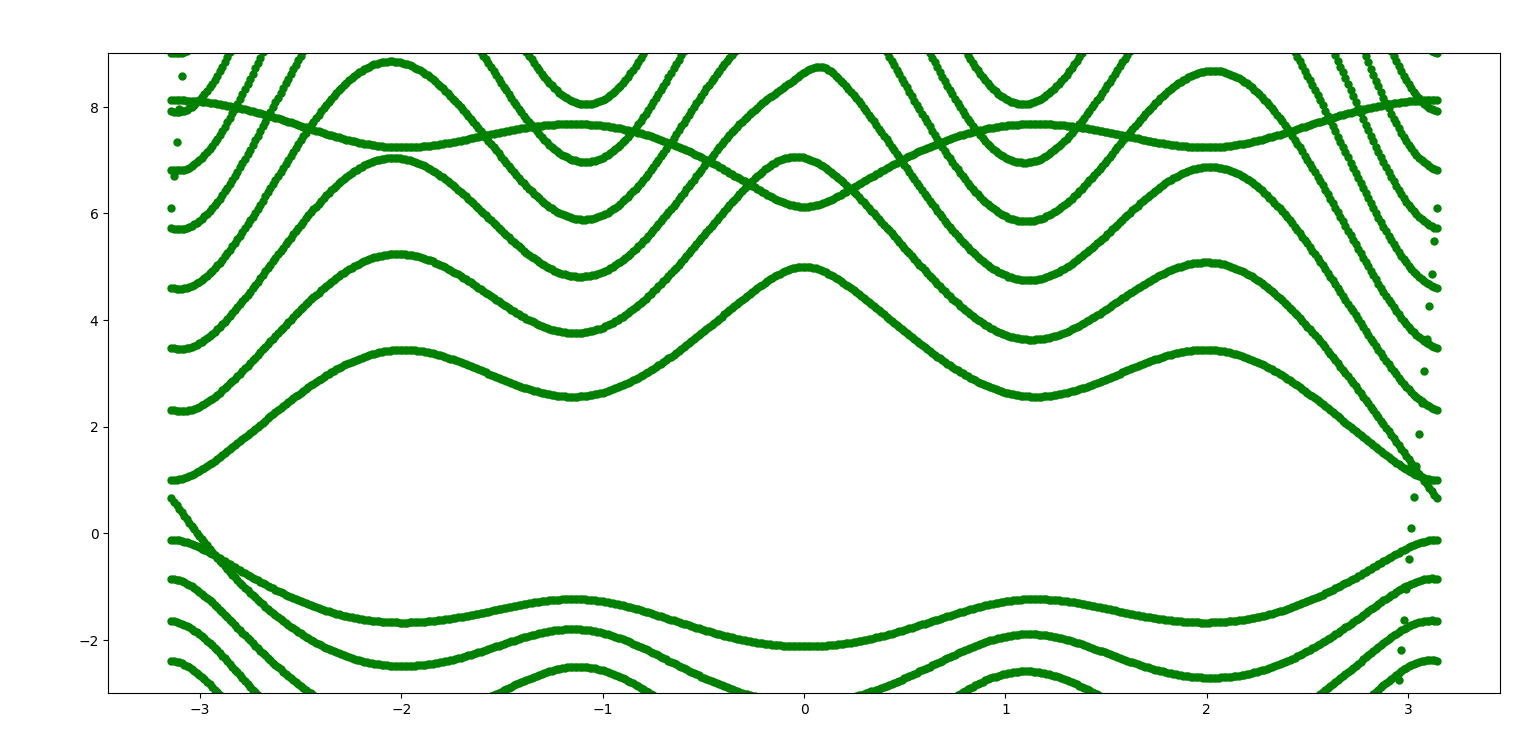}
  \caption{The spectrum of $\widetilde{H}$ at $E_F = 1.5$, sampled at 500 sites. The blue is the spectrum of 
  $\widetilde{H}_{\text{edge}}$, the green the spectrum of $\widetilde{H}_{\text{edge}}^{-}$}
  \label{Positive Chern}
\end{figure}
Analogous statements hold for negative Chern number, seen in Figure \ref{Negative Chern}.
\begin{figure}[h!]
    \centering
  \includegraphics[totalheight=4cm]{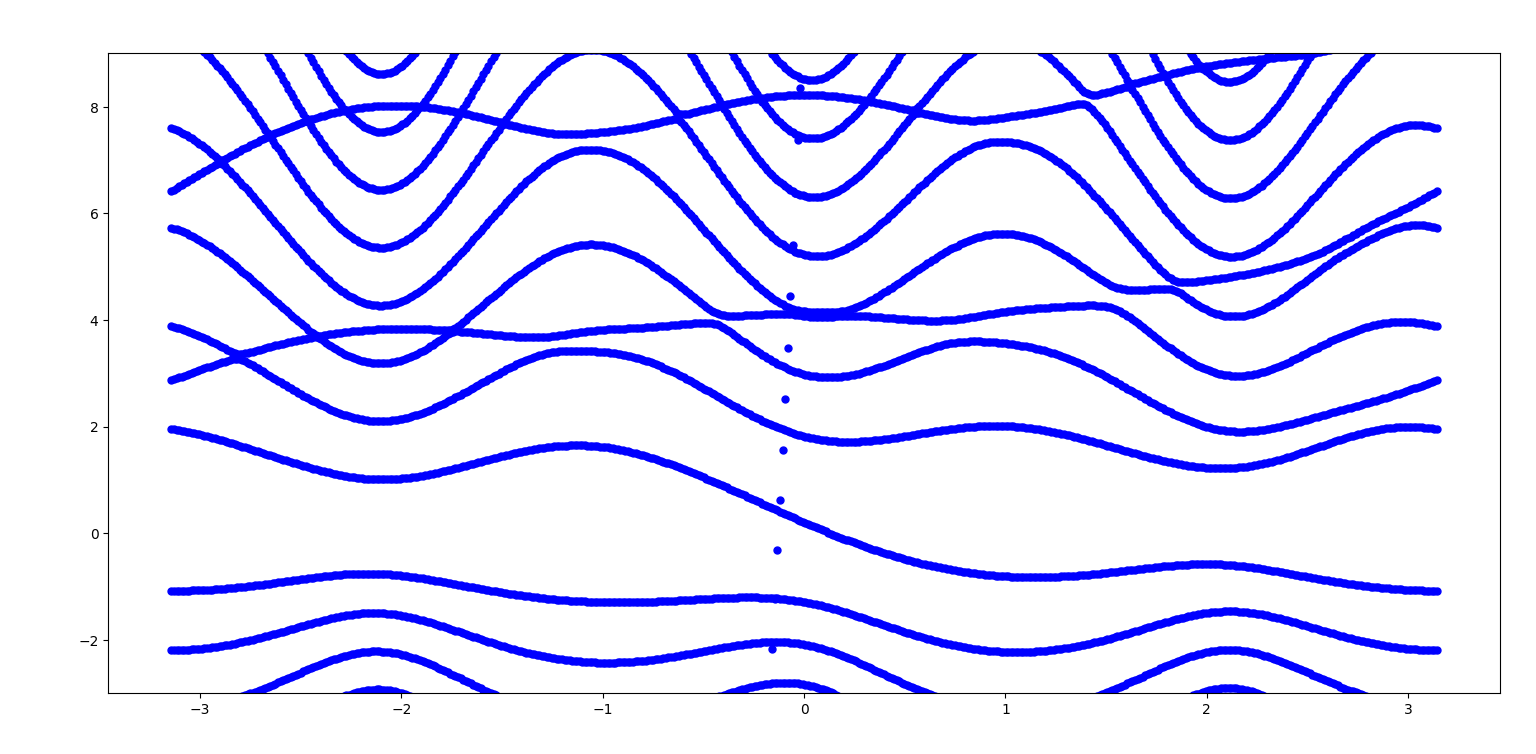}
  \includegraphics[totalheight=4cm]{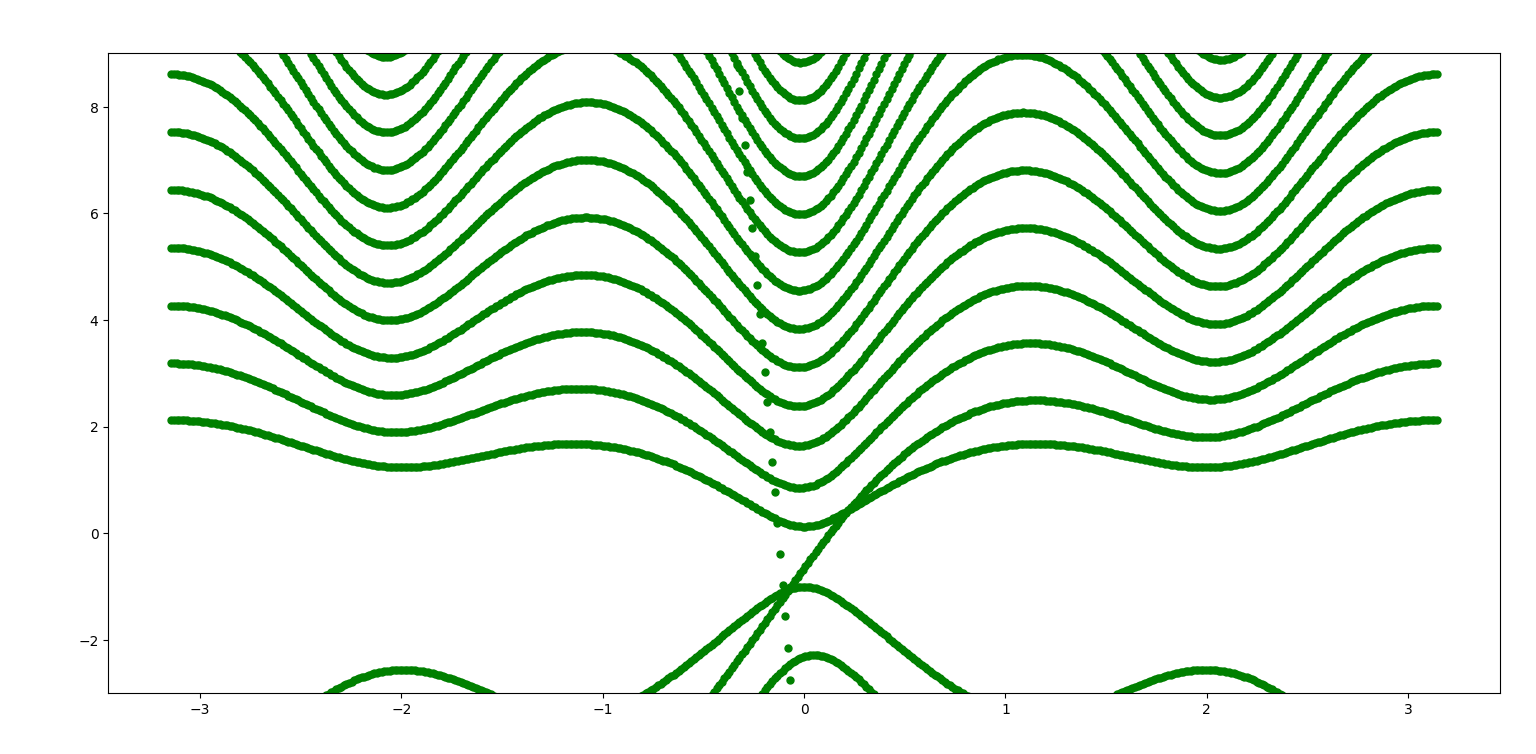}
  \caption{The spectrum of $\widetilde{H}$ at $E_F = -1.5$, sampled at 500 sites. The blue is the spectrum of 
  $\widetilde{H}_{\text{edge}}$, the green the spectrum of $\widetilde{H}_{\text{edge}}^{-}$}
  \label{Negative Chern}
\end{figure}
The dotted lines in the figures are the spurrious eigenvalues,
which Figure \ref{Parallel Transport} demonstrates.
\begin{figure}[h!]
    \centering
  \includegraphics[totalheight=4cm]{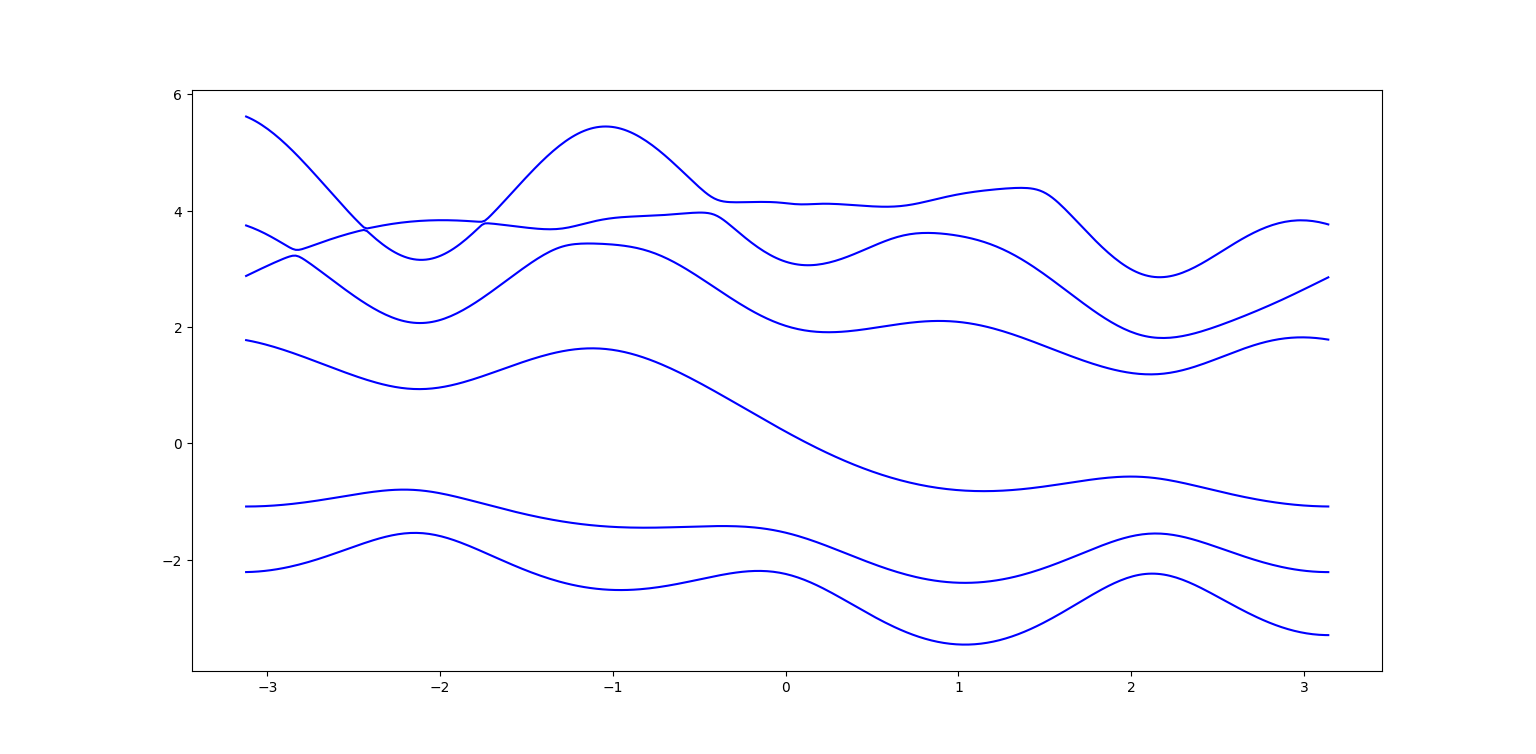}
  \caption{Parallel transport ignoring the dotted lines}
  \label{Parallel Transport}
\end{figure}

\section*{Acknowledgments}
M.F., S.D. and N.S. were supported by the NSF under grant DMS-2407290. M.F. thanks Sven Bachmann for discussions. M.F. is grateful for hospitality at the Research Institute for Mathematical Sciences at Kyoto University.


\bibliography{HeatCurrent}
\bibliographystyle{plain}

\end{document}